\spnewtheorem{claim}[theorem]{Claim}{\bfseries}{\itshape}
\let\c@lemma\relax
\spnewtheorem{lemma}[theorem]{Lemma}{\bfseries}{\itshape}
\let\c@corollary\relax
\spnewtheorem{corollary}[theorem]{Corollary}{\bfseries}{\itshape}
\let\c@proposition\relax
\spnewtheorem{proposition}[theorem]{Proposition}{\bfseries}{\itshape}
\spnewtheorem{assumption}[definition]{Assumption}{\bfseries}{\itshape}
\renewcommand{\vec}[1]{\boldsymbol{#1}}
\mathchardef\mhyphen="2D
\newcommand{\E}{\mathbb{E}}
\newcommand{\cG}{\mathcal{G}}
\newcommand{\cL}{\mathcal{L}}
\newcommand{\cU}{\mathcal{U}}
\newcommand{\init}{\mathsf{Init}}
\newcommand{\peel}{\mathsf{Delete}}
\newcommand{\encode}{\mathsf{Insert}}
\newcommand{\decode}{\mathsf{ListEntries}}
\newcommand{\basicinit}{\mathsf{BasicInit}}
\newcommand{\basicpeel}{\mathsf{BasicDelete}}
\newcommand{\basicencode}{\mathsf{BasicInsert}}
\newcommand{\basicdecode}{\mathsf{BasicListEntries}}
\newcommand{\universe}{\mathcal{U}}
\newcommand{\keyspace}{\mathcal{K}}
\DeclareMathOperator\lglg{lglg}
\newcommandx*{\pcand}[2][1=\ ,2=\ ]{#1\highlightkeyword[#2]{and}}
\newcommand{\ssdiff}{\ensuremath{\bigtriangleup}\xspace}
\newenvironment{customTheorem}[1]
  {\count@\c@theorem
   \global\c@theorem#1 %
    \global\advance\c@theorem\m@ne
   \theorem}
  {\endtheorem
   \global\c@theorem\count@}
\DeclareMathOperator{\polylog}{\mathsf{polylog}}
\newcommand{\sk}{\mathsf{sk}}
\newcommand{\pk}{\mathsf{pk}}
\newcommand{\encscheme}{\mathcal{E}}
\newcommand{\gen}{\mathsf{Gen}}
\renewcommand{\sample}{\mathsf{Sample}}
\newcommand{\entry}{\mathsf{Entry}}
\newcommand{\recindex}{\mathsf{Index}}
\newcommand{\eval}{\mathsf{Eval}}
\newcommand{\compress}{\mathsf{Compress}}
\newcommand{\decompress}{\mathsf{Decompress}}
\newcommand{\msgspace}{\mathcal{M}}
\newcommand{\circspace}{\mathcal{Z}}
\newcommand{\ipcirc}{\circspace_{\mathsf{ip}}}
\newcommand{\pkspace}{\mathcal{P}}
\newcommand{\cspace}{\mathcal{C}}
\newcommand{\filtered}{\mathcal{F}}
\DeclareMathOperator{\hw}{\mathsf{hw}}
\newcommand{\valid}{\mathsf{vld}}
\newcommand{\sparse}{\mathsf{sparse}}
\newcommand{\bintofield}{\mathsf{binToField}}
\newcommand{\fieldtobin}{\mathsf{fieldToBin}}
\newcommand{\proj}{\mathsf{proj}}
\newcommand{\decomp}{\mathsf{decomp}}
\newcommand{\unique}{\raisebox{-.25ex}{\textrm{\ding{36}}}}
\newcommand{\basicdelete}{\mathsf{BasicDel}}
\renewcommand{\gamechange}[2][gamechangecolor]{%
{\setlength{\fboxsep}{0pt}\ifmmode%
\mathchoice{%
\colorbox{#1}{$\displaystyle#2$}%
}{%
\colorbox{#1}{$#2$}%
}{%
\colorbox{#1}{$\scriptstyle#2$}%
}{%
\colorbox{#1}{$\scriptscriptstyle#2$}%
}%
\else{#2}\fi}%
}
\title{Invertible Bloom Lookup Tables\texorpdfstring{\\}{} with Less Memory and Randomness}
\author{
	Nils Fleischhacker \inst{1}\thanks{\texttt{mail@nilsfleischhacker.de}. Funded by the Deutsche Forschungsgemeinschaft (DFG, German Research Foundation) under Germany's Excellence Strategy - EXC 2092 CASA - 390781972.}\orcidlink{0000-0002-2770-5444}\and
        Kasper Green Larsen\inst{2}\thanks{\texttt{larsen@cs.au.dk}. Supported by a DFF Sapere Aude Research Leader grant No 9064-00068B.}\orcidlink{0000-0001-8841-5929} \and 
	Maciej Obremski\inst{3}\thanks{\texttt{obremski.math@gmail.com}. Funded by MOE2019-T2-1-145 Foundations of quantum-safe cryptography.}\orcidlink{0000-0003-4174-0438} \and
        Mark Simkin\inst{4}\thanks{\texttt{mark@univariate.org}}\orcidlink{0000-0002-7325-5261}
}
\institute{
  Ruhr University Bochum \and
  Aarhus University \and
  National University of Singapore \and
  Independent Researcher
}
\begin{document}
\pagestyle{plain}

\maketitle

\begin{abstract}
In this work we study Invertible Bloom Lookup Tables (IBLTs) with small failure probabilities.
IBLTs are highly versatile data structures that have found applications in set reconciliation protocols, error-correcting codes, and even the design of advanced cryptographic primitives.
For storing $n$ elements and ensuring correctness with probability at least $1 - \delta$, existing IBLT constructions require $\Omega(n(\frac{\log(1/\delta)}{\log(n)}+1))$ space and they crucially rely on fully random hash functions.

We present new constructions of IBLTs that are simultaneously more space efficient and require less randomness.
For storing $n$ elements with a failure probability of at most $\delta$, our data structure only requires $\mathcal{O}\left(n + \log(1/\delta)\log\log(1/\delta)\right)$ space and $\mathcal{O}\left(\log(\log(n)/\delta)\right)$-wise independent hash functions.

As a key technical ingredient we show that hashing $n$ keys with any $k$-wise independent hash function $h:U \to [Cn]$ for some sufficiently large constant $C$ guarantees with probability $1 - 2^{-\Omega(k)}$ that at least $n/2$ keys will have a unique hash value.
Proving this is non-trivial as $k$ approaches $n$.
We believe that the techniques used to prove this statement may be of independent interest.

We apply our new IBLTs to the encrypted compression problem, recently studied by Fleischhacker, Larsen, Simkin (Eurocrypt 2023).
We extend their approach to work for a more general class of encryption schemes and using our new IBLT we achieve an asymptotically better compression rate.

\end{abstract}

\thispagestyle{empty}
\newpage
\setcounter{page}{1}


\section{Introduction}\label{sec:intro}

The Invertible Bloom Lookup Table (IBLT) is a very elegant data structure by Goodrich and Mitzenmacher~\cite{All:GooMit11}. It functions much like a dictionary data structure, supporting insertions, deletions and the retrieval of key-value pairs. What is special about the IBLT, is that upon initialization, one decides on a threshold $n$. Now, regardless of how many key-value pairs are present in the IBLT, the space usage will always remain proportional to $n$. Of course this comes at a cost, namely that the retrieval operations will temporarily stop functioning, when the number of pairs stored in the IBLT exceeds $n$. When the number of stored pairs falls below $n$ again, the IBLT will resume supporting retrieval queries.

The above functionality is extremely useful in many applications. Consider for instance the set reconciliation problem~\cite{IEEE:MTZ03,ACM:EGUV11}. Here two parties Alice and Bob hold sets $S_A$ and $S_B$ of key-value pairs. Think of these sets as two replicas of a database storing key-value pairs. In applications where insertions and deletions into the database must be supported quickly, we may allow the two sets $S_A$ and $S_B$ to be slightly inconsistent, such that a client performing an operation on the database will not have to wait for synchronization among the two replicas. Instead, Alice and Bob will every now and then synchronize their two sets $S_A$ and $S_B$. For this purpose, Alice maintains an IBLT for her set $S_A$, which she may send to Bob. Upon receiving the IBLT, Bob then deletes every element from his set $S_B$ from Alice's IBLT. If $|(S_A \setminus S_B) \cup (S_B \setminus S_A)|$ is less than the threshold $n$, Bob can retrieve the key-value pairs in $S_A \setminus S_B$. Since the space usage of IBLTs is only proportional to the threshold $n$, this allows for the communication between Alice and Bob to be proportional to $|(S_A \setminus S_B) \cup (S_B \setminus S_A)|$ and not $|S_A|$ or $|S_B|$. This may result in significant savings, when the sets $S_A$ and $S_B$ are large, but very similar.
IBLTs have also seen uses in numerous other applications, ranging from distributed systems applications~\cite{PABHL17,DC:MP18} over fast error-correcting codes~\cite{IEEE:MV12} to cryptography~\cite{ACNS:AGLPPT17,EC:FleLarSim22,EC:FleLarSim23}.

The surprising functionality of IBLTs is supported via hashing. In more detail, the original IBLT construction by Goodrich and Mitzenmacher consists of an array $A$ of $m$ cells along with a hash function $h$ mapping keys to $k$ distinct entries in $A$ for a tuneable parameter $k$. Each cell of $A$ has three fields, a \emph{count}, a \emph{keySum} and a \emph{valueSum}. When inserting a key-value pair $(x,y)$, we compute the $k$ positions $h(x) = (i_1,\dots,i_k)$, increment the \emph{count} field in $A[i_j]$, add $x$ to the \emph{keySum} of $A[i_j]$ and add $y$ to the \emph{valueSum} of $A[i_j]$ for each $j=1,\dots,k$. A deletion of a key-value pair is simply supported by reversing these operations, i.e. decrementing \emph{count} and subtracting $x$ from \emph{keySum} and $y$ from \emph{valueSum}. To support the retrieval of the value associated with a query key $x$, we again compute $h(x) = (i_1,\dots,i_k)$ and examine the entries $A[i_j]$. If we find such an entry where the \emph{count} field is one, then we know that only one key-value pair hashed there. We can thus compare the \emph{keySum} to $x$, and if they are equal, we can return the \emph{valueSum}. If the \emph{keySum} is different from $x$, or we find a cell with a \emph{count} of zero, we may return that $x$ is not in the IBLT. Finally, if all $k$ \emph{count} fields are at least two, we return ``Don't know''. If the number of cells $m$ is $2nk$, then the chance that a key-value pair hashes to at least one unique entry (no collisions) is around $1-2^{-\Omega(k)}$ whenever the number of key-value pairs stored in the IBLT does not exceed the threshold $n$.

\paragraph*{Peeling.}
The simple functionality above supports Insertions, Deletions and Get operations, where a Get operation retrieves the value associated with a query key $x$. Using space $O(nk)$, the Get operation succeeds with probability $1-2^{-\Omega(k)}$. However, in several applications, such as set reconciliation, one is more interested in outputting the list of all key-value pairs present in the IBLT. For this purpose, a ListEntries operation is also supported. To list all key-value pairs in the IBLT, we repeatedly look for a cell in $A$ with a \emph{count} of one. When we find such a cell $A[i]$, we output $(x,y) = $($A[i]$.\emph{keySum}, $A[i]$.\emph{valueSum}) and then delete $(x,y)$ from the IBLT. This process of \emph{peeling} the key-value pairs reduces the \emph{count} of other fields and thus increases the chance that we can continue peeling key-value pairs. Concretely, the ListEntries operation can be shown to succeed with probability $1-\Omega(n^{-k + 2})$ when the number of key-value pairs present in the IBLT does not exceed the threshold $n$. The peeling success probability thus far exceeds that of the simple Get operation when hashing to at least $k=3$ entries.

\paragraph*{Supporting False Deletions.}
The attentive reader may have observed that the simple version of the IBLT described above critically assumes that no deletions are performed on key-value pairs that are not already present in the IBLT. In the set reconciliation example, this is insufficient as there may be key-value pairs in $S_B$ that are not in $S_A$, which will cause false deletions. A simple extension to the IBLT ensures that it also functions if the total number of present key-value pairs plus the number of false deletions does not exceed the threshold $n$. For set reconciliation, this is equivalent to $|S_A \setminus S_B| + |S_B \setminus S_A| \leq n$. To support such false deletions, we add a \emph{hashSum} field to every cell and include another hash function $g$ mapping keys to a sufficiently large output domain $[R]$. When inserting key-value pairs, $g(x)$ is added to the \emph{hashSum} field of $A[i_j]$ and subtracted during deletions. To retrieve the value associated with a key $x$, we proceed as before, but whenever the \emph{count} is either $-1$ or $1$, we also perform a check that the \emph{hashSum}  is equal to $g$ applied to the \emph{keySum}. If not, we treat the cell as if the count was at least $2$. For ListEntries, a peeling operation also includes such checks and furthermore, when a \emph{count} is $-1$, we may instead insert $(x,y)=$(-\emph{keySum},-\emph{valueSum}) if $g$ applied to -\emph{keySum} equals -\emph{hashSum}. A second source of error is when the same key has been inserted with multiple different values. We ignore this issue here, and remark that the ListEntries in the original IBLT also fails in recovering keys with multiple associated values.

\paragraph*{Memory Usage and Randomness.}
In this paper, we focus on the more interesting ListEntries operation and ignore the Get operation. Requiring that ListEntries succeeds with probability $1-\delta$, the classic IBLT uses space $O(n(\lg(1/\delta)/\lg n + 1))$, since we must set $k = O(1 + \lg_n(1/\delta))$ to make $n^{-k+1} \leq \delta$, and the space usage is $m=O(nk)$ cells. Notice here, and throughout the paper, that space is measured in number of \emph{cells} of the IBLT. In terms of bit complexity, the \emph{count} field needs $O(\lg n)$ bits, the \emph{keySum} and \emph{valueSum} fields need $O(\lg |U| + \lg n)$ bits when keys and values come from a universe $U$. Finally, in both previous IBLTs and our new construction, the \emph{hashSum} field needs $O(\lg(1/\delta) + \lg n)$ bits. Thus each cell of the table costs $O(\lg(|U|n/\delta))$ bits. 

The analysis of the classic IBLT critically assumes that the hash function $h$ is truly fully random. This is of course unrealistic in practice. But where many typical data structures can make due with $O(\lg(1/\delta))$ or $O(\lg n)$-wise independent hash functions, this is not known to be the case for the IBLT. Concretely, the standard analysis of the peeling process of the IBLT requires a union bound over exponentially many events (for every set of $2 \leq j \leq n$ keys $S$, for every set $T$ of $jk/2$ entries of $A$, we have a failure event saying that $h(x) \in T$ for all $x \in S$). With exponentially many events in the union bound, each of them must occur with probability at most $\exp(-\Omega(n))$ for the union bound to be useful. This requires a seed length of $\Omega(n)$ bits for a hash function and thus cannot be implemented with $k$-wise independence for $k$ significantly less than $n$. It could be the case that a more refined analysis could show that less randomness suffices, but this has not yet been demonstrated. 

We remark that it is possible to show that tabulation hashing~\cite{FOCS:DKRT15,Thorup17} may be used to support peeling, but this also requires a random seed of length proportional to $n$, since it requires a character size of at least $(1+\Omega(1))n$, and the space usage is at least the number of characters.
Finally, we mention that it may also be possible to use the splitting trick of Dietzfelbinger and Rink~\cite{ICALP:DieRin09}, but as far as we are aware of, it would be not more efficient than tabulation hashing in this context.

\subsection{Our Contributions}
Our main contribution is a new version of the IBLT that is both more space efficient and that can be implemented with much less randomness. We call our new data structure a Stacked IBLT and show the following:
\begin{theorem}
Let $\delta$ be less than a sufficiently small constant.
Given a threshold $n$, the Stacked IBLT supports Insertions, Deletions and ListEntries operations, where ListEntries succeeds with probability $1-\delta$ when the number of key-value pairs is no more than $n$. Furthermore, it uses space $O(n + \lg(1/\delta)\lglg(1/\delta))$ cells and requires only $O(\lg(\lg(n)/\delta))$-wise independent hashing.
\end{theorem}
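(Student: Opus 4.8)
The plan is to build the Stacked IBLT as a \emph{cascade} of single‑hash IBLT layers, grouped into \emph{levels}. Set the independence parameter to $\kappa := C_0\lg(\lg(n)/\delta)$ for a suitable constant $C_0$, and for each level $c$ ranging over the powers of two $n, n/2, \dots, 2$, include $\ell_c := \max\{1,\lceil C_1\kappa/c\rceil\}$ consecutive layers; every such layer is a basic IBLT with the standard \emph{count}/\emph{keySum}/\emph{valueSum}/\emph{hashSum} fields, has $\lceil C_2 c\rceil$ cells (with $C_2$ the constant of the key lemma), and carries its own fresh $\kappa$‑wise independent hash $h$ sending each key to one cell; all hashes are mutually independent. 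Insertions and deletions are applied in every layer, exactly as in the basic IBLT. ListEntries processes the layers in order: in the current layer it repeatedly peels each cell of \emph{count} $\pm1$ whose \emph{hashSum} is consistent --- outputting the pair and deleting it from \emph{all} layers (or, for a cell of \emph{count} $-1$, inserting its negation everywhere) --- until nothing is peelable, then moves to the next layer, and declares success iff every layer ends empty. The point of one hash per layer is that peeling within a layer never cascades: the keys surviving a layer are exactly those not alone in their cell, so a single layer is precisely the setting of the key lemma (a $k$‑wise independent hash of $m$ keys into $\ge C_2 m$ cells leaves at least $m/2$ keys unique, except with probability $2^{-\Omega(k)}$).

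For correctness, the threshold promise gives $|S_1|\le n$, where $S_1$ is the relevant key set (present pairs together with falsely deleted ones), and if $S_i$ is the set reaching layer $i$ then $S_{i+1}$ consists of the keys of $S_i$ colliding under $h_i$. Applying the key lemma with $k=\min\{\kappa,|S_i|\}$ --- valid since within level $c$ the layer has $\lceil C_2 c\rceil \ge C_2|S_i|$ cells --- gives $|S_{i+1}|\le |S_i|/2$ except with probability $2^{-\Omega(\min\{\kappa,|S_i|\})}$. Let $A_c$ be the event that the count is $\le c$ when ListEntries first reaches level $c$; $A_n$ holds, and conditioned on $A_c$ each of the $\ell_c$ layers of level $c$ independently brings the count to $\le c/2$ with probability $\ge 1-p_c$ for $p_c=2^{-\Omega(\min\{\kappa,c/2\})}$, so $\Pr[\neg A_{c/2}\mid A_c]\le p_c^{\ell_c}$. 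A short calculation shows this equals $2^{-\Omega(\kappa)}$ at \emph{every} level: for $c\gtrsim\kappa$ a single layer already gives $2^{-\Omega(\kappa)}$, while for $c<\kappa$ a single layer only gives $2^{-\Omega(c)}$ --- possibly far above $\delta$ --- but repeating it $\Theta(\kappa/c)$ times with independent hashes amplifies success to $1-2^{-\Omega(\kappa)}$. Conditioned on $A_2$, the level‑$2$ layers reduce the count to $0$ unless they all fail ($2^{-\Omega(\kappa)}$; a lone key is always peeled). Taking $C_0$ large makes each of the $O(\lg n)$ events above have probability $\le\delta/(2(\lg n+2))$, so all hashing failures contribute at most $\delta/2$; false deletions account for another $\le\delta/2$ (below).

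The space follows by summing over levels,
\[
\sum_c \ell_c\lceil C_2 c\rceil \;=\; \sum_{c\gtrsim\kappa}\Theta(c)\;+\;\sum_{c\lesssim\kappa}\Theta(\kappa/c)\cdot\Theta(c)\;=\;O(n)+O(\kappa\lg\kappa),
\]
the first sum being geometric and dominated by $\Theta(n)$, the second having $O(\lg\kappa)$ terms each $\Theta(\kappa)$. With $\kappa=\Theta(\lg(\lg(n)/\delta))$, distinguishing whether $\lg(1/\delta)$ is below or above $\lglg n$ gives $\kappa\lg\kappa = O(n+\lg(1/\delta)\lglg(1/\delta))$; in particular when $\lg(1/\delta)\gg n$ no level has $c\gtrsim\kappa$, the structure costs $\Theta(\kappa\lg n)$, and $\lg n<\lglg(1/\delta)$, so this is still within the bound. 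Every hash is $\kappa=O(\lg(\lg(n)/\delta))$‑wise independent. False deletions are handled by the standard \emph{hashSum} extension of Goodrich--Mitzenmacher (a hash $g$ into a range $\mathrm{poly}(\lg(n)/\delta)$, of independence $O(\lg(\lg(n)/\delta))$, plus the routine bound that no cell holds too many keys, so that a union bound over all cells rules out peeling a spurious cell except with probability $\le\delta/2$); this is precisely the source of the $O(\lg(1/\delta)+\lg n)$ bits per cell and does not change the cell count or the independence bound.

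The step I expect to be most delicate is the analysis of levels with $c<\kappa$, i.e.\ the whole regime $\lg(1/\delta)=\Omega(n)$: a single hash on the surviving keys cannot fail with probability below $2^{-\Omega(c)}$ no matter how much independence it is given, so one genuinely needs to re‑run short layers $\Theta(\kappa/c)$ times. The quantitative heart of the argument is that such a level then costs only $\Theta(\kappa)$ cells (a $\Theta(\kappa/c)$‑factor more layers, each $\Theta(1/c)$‑factor smaller), so the whole tail of the cascade costs $\Theta(\kappa\lg\kappa)=O(\lg(1/\delta)\lglg(1/\delta))$ --- versus the $\Theta(\lg^2(1/\delta)/\lglg(1/\delta))$ one would get from finishing with a single classic multi‑hash IBLT on $\Theta(\lg(1/\delta))$ keys. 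Composing the key lemma across independent layers, the geometric sums, and the standard false‑deletion bookkeeping are then routine.
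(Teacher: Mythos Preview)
Your proposal is correct and follows essentially the same approach as the paper: a cascade of single-hash layers of geometrically decreasing size, with Theorem~\ref{thm:limitind} guaranteeing each layer halves the surviving set, and independent repetitions of the small layers to amplify success once a single application can no longer be driven below the target failure probability. The only differences are cosmetic parametrization: you place the switchover to repeated layers at $\kappa=\Theta(\lg(\lg(n)/\delta))$ and use a uniform $\delta/\lg n$ union bound over the $\lg n$ levels, whereas the paper switches at $\tau=\Theta(\lg(1/\delta))\le\kappa$ and uses a $1/(\text{distance})^2$-weighted bound to stretch single rows slightly further down; since $\kappa\lg\kappa=O(n+\tau\lg\tau)$ in every regime, both choices yield the stated space. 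One small point worth making explicit: for the bottom few levels with constant $c$, Theorem~\ref{thm:limitind} is vacuous (the leading factor $4$ swamps the exponent), so your claimed $p_c=2^{-\Omega(c)}$ there needs the direct Markov argument the paper uses to get $p_c<1/2$; with $\ell_c=\Theta(\kappa)$ repetitions at those levels the amplified bound $p_c^{\ell_c}=2^{-\Omega(\kappa)}$ is unaffected.
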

Comparing this to the classic IBLT, our construction outperforms it for any $\delta = n^{-\omega(1)}$ and more importantly, it can be implemented with a small random seed. Our Stacked IBLT also supports false deletions like the classic IBLT and ListEntries succeeds with the claimed probability if the number of key-value pairs plus the number of false deletions does not exceed $n$.

We note that such small failure probabilities are important in cryptographic applications, like the ones that rely on encrypted compression~\cite{CCS:CDGLY21,C:LiuTro22}.
A data-dependent failure of a data structure leaks information about its contents, even if one can not see the contents of the data structure itself.
In cryptographic applications, where security should commonly break with at most a negligible probability, using a (encrypted) data structure, which fails with an inverse polynomial probability is insufficient.
An adversary could deduce information about encrypted data by just observing, whether a cryptographic protocol successfully terminates or not. 

The overall idea in the Stacked IBLT is to construct arrays $A_1,\dots,A_{\lg n}$ where $A_i$ has $C n/2^i$ entries. Each of the arrays has its own hash function $h_i$ mapping keys to a single entry in $A_i$. To support the ListEntries operation, we start by peeling all elements in  $A_1$ that hash uniquely. We then proceed to $A_2$ and so forth. The critical property we require is that each time we peel, we successfully peel at least half of all remaining key-value pairs. In this way, the number of entries in the next $A_i$ to peel from, is always a constant factor larger than the number of remaining key-value pairs. When we reach $A_{\lg n}$, we finally peel the last key-value pair. In this way, all we need from the hash functions $h_i$, is that at least half the key-value pairs hash uniquely with probability $1-\delta/\lg n$. We prove that this is the case if the $h_i$'s are just $O(\lg(\lg(n)/\delta))$-wise independent:
\begin{theorem}
  \label{thm:limitind}
Let $x_1,\dots,x_n \in U$ be a set of $n$ distinct keys from a
universe $U$ and let $h : U \to [Cn]$ be a hash function drawn from a
$2k$-wise independent family of hash functions.  If $C \geq 8e$, then
with probability at least $1-4 \cdot (8e/C)^{\min\{k,n/C\}}$ it holds that there are no more than $n/2$ indices $i$ such that there exists a $j \neq i$ with $h(x_i) = h(x_j)$.
\end{theorem}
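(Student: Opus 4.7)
The plan is to reduce the statement to a moment bound on the number of colliding pairs $Y = \sum_{j} \binom{N_j}{2}$, where $N_j = |\{i : h(x_i) = j\}|$ is the load of bucket $j$. Since $\binom{m}{2} \geq m/2$ for every $m \geq 2$, the number of indices that share their hash value with some other key is at most $2Y$, so it suffices to prove
\[
\Pr[Y > n/4] \leq 4 \cdot (4e/C)^{\min\{k,\, n/C\}}.
\]

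Set $k' = \min\{k,\, \lfloor n/C \rfloor\}$. Writing $Y = \sum_{i<j} \mathbf{1}[h(x_i) = h(x_j)]$ and expanding $Y^{k'}$ gives a sum of monomials, each of the form $\E[\prod_{t=1}^{k'} \mathbf{1}[h(x_{a_t}) = h(x_{b_t})]]$ and depending on the values of $h$ on at most $2k'$ distinct keys. Since $2k' \leq 2k$, the $2k$-wise independence of $h$ makes every such expectation equal to what it would be for a truly uniform $h$; it then evaluates to $(Cn)^{-(v-c)}$, where $v$ is the number of distinct keys appearing in the monomial and $c$ the number of connected components in the (multi)graph on these vertices whose edges are the $k'$ pairs $\{a_t, b_t\}$. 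In particular, $\E[Y^{k'}]$ may be computed as if $h$ were fully random.

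The main obstacle, and hence the step I would spend the most effort on, is the combinatorial bound on $\E[Y^{k'}]$. I would organise the sum by the isomorphism type of the resulting multigraph: a shape with $v$ vertices and $c$ components admits at most $n^v$ labellings by distinct keys, which combined with the $(Cn)^{-(v-c)}$ factor collapses to $n^c \cdot C^{-(v-c)}$. One then counts the ordered edge sequences realising each shape, sums over shapes, and identifies the dominant contribution; this should yield $\E[Y^{k'}] \leq 4 \cdot (en/C)^{k'}$ up to absorbing constants. Markov's inequality then gives $\Pr[Y > n/4] \leq \E[Y^{k'}]/(n/4)^{k'} \leq 4 \cdot (4e/C)^{k'}$, which is the desired estimate. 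The saturation at $k' = n/C$ in the exponent reflects the fact that once $k'$ exceeds $n/C$, shapes with $v$ close to $2k'$ begin to dominate the sum and the moment bound no longer improves on the tail estimate, so raising the moment further buys nothing.
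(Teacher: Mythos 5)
Your reduction to the colliding-pair count $Y=\sum_j\binom{N_j}{2}$ is a valid implication (at most $2Y$ indices are non-unique), but the statement you reduce to is \emph{false} in the regime the theorem must cover, so the moment bound you flag as the main obstacle is not merely hard --- it cannot hold. The trouble is that a bucket of load $\ell$ contributes $\binom{\ell}{2}=\Theta(\ell^2)$ to $Y$ while accounting for only $\ell$ non-unique indices. Concretely, even for a fully random $h$, the event that the first $n/\lg n$ keys all land in the first $n/\lg^3 n$ cells has probability at least $(C\lg^3 n)^{-n/\lg n}=\exp(-O(n\lglg n/\lg n))$, and on this event convexity gives $Y\geq (n/\lg^3 n)\binom{\lg^2 n}{2}\approx n\lg n/2\gg n/4$. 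Hence $\Pr[Y>n/4]\geq\exp(-O(n\lglg n/\lg n))$, which for $k\geq n/C$ is exponentially larger than the target $4(4e/C)^{n/C}=\exp(-\Omega(n))$. The same event forces $\E[Y^{k'}]\geq (n\lg n/2)^{k'}\exp(-O(n\lglg n/\lg n))$, exceeding your claimed bound $4(en/C)^{k'}$ by a factor $\exp(\Omega(k'\lglg n))$ at $k'=n/C$. Your closing remark attributes the truncation at $n/C$ to the moment method saturating beyond that point, but the breakdown occurs much earlier: the approach via $Y$ only works for $k'$ up to roughly $\sqrt{n}$, whereas the theorem (and the IBLT application) needs $k$ up to $\Theta(n)$.

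The missing idea is to replace the pair indicators by variables under which an $\ell$-fold collision contributes only $O(\ell)$. The paper sets $Y_{i,j}=1$ iff $h(x_i)=h(x_j)$ \emph{and} no index $a$ strictly between $i$ and $j$ satisfies $h(x_a)=h(x_i)$, i.e., $i$ and $j$ are consecutive in the sorted order of the keys in their bucket. A bucket of load $\ell$ then contributes exactly $2\ell-2\geq\ell$ to $\sum_{i\neq j}Y_{i,j}$, so more than $n/2$ non-unique indices still forces $\sum_{i\neq j}Y_{i,j}>n/2$, and now the tail bound one needs is actually true. These variables depend on all hash values, but in the expansion of $\bigl(\sum_{i\neq j}Y_{i,j}\bigr)^{r}$ every monomial whose edge multigraph contains a vertex with three distinct neighbours (in particular, any cycle) vanishes identically, leaving only path-shaped components. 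Each surviving monomial is then upper-bounded by the corresponding product of simple indicators $X_{i,j}$, restoring the setting where $2k$-wise independence applies, and a count of path-forests yields the $(4e/C)^{r}$ bound. This pruning is precisely what eliminates the dense, clique-like shapes that dominate and blow up $\E[Y^{k'}]$ in your version.
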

In addition to allowing implementations with limited independence, the geometrically decreasing sizes of the arrays $A_i$ also result in the improved space usage compared to classic IBLTs.

While \autoref{thm:limitind} might at first sight appear to
follow from standard approaches for analyzing hash functions with
limited independence, there are in fact several difficult obstacles that we
need to overcome to prove it. In particular, as $k$ approaches $n$, the obvious approaches fail miserably. Furthermore, our Stacked IBLTs critically needs Theorem~\ref{thm:limitind} to hold for $k$ all the way up to $n$. 
We believe the ideas we use to overcome this barrier are interesting in their own right and may prove useful in future work. 
We thus discuss these ideas and the barriers we overcome in Section~\ref{sec:tech}.

Let us also
comment on the constant $8e$. It is not as small as one could hope,
but it is small enough that we have chosen to state it explicitly rather
than hide it in $O$-notation. Presumably our analysis could be
tightened further to reduce it by a constant factor, but we have focused on a
clean exposition of the proof.

Finally, let us also comment that when the number of remaining key-value pairs drop below $\lg(1/\delta)$, Theorem~\ref{thm:limitind} is insufficient to guarantee a success probability of $1-\delta/\lg n$ due to the $\min\{k,n/c\}$ in the exponent. For this reason, we change strategy and replace some of the arrays $A_i$ by matrices with multiple rows. We leave the details to later sections and mention here that this is what causes the $\bigO{\lg(1/\delta)\lglg (1/\delta)}$ term in the space usage of the Stacked IBLT.

In terms of computational efficiency our construction is slightly worse than that of Goodrich and Mitzenmacher.
Retrieving all key-value pairs from their IBLT has a computational cost of $\bigO{n \cdot (1 + \lg(1/\delta)/\lg(n)}$, while our construction requires $\bigO{n \cdot \lg(n/\delta)}$.
In our opinion, however, this is a small price to pay for achieving smaller IBLTs that require less randomness.

\paragraph*{Encrypted Compression.}
We apply our new data structure to the encrypted compression problem, studied by Fleischhacker, Larsen, and Simkin~\cite{EC:FleLarSim23}.
Here one is given an array of ciphertexts of a homomorphic encryption scheme, where at most $t$ are encryptions of non-zero values.
The goal of an encrypted compression scheme is to compress this vector as much as possible, without knowing what is inside the ciphertexts, i.e. without knowing which entries in the vector are encryptions of zero and which are not.
Apart from being theoretically interesting, this problem also naturally appears as part of larger cryptographic protocols~\cite{CCS:CDGLY21,C:LiuTro22}.
We show that following the approach of Fleischhacker, Larsen, and Simkin one can use our stacked IBLT data structure to obtain better encrypted compression schemes.
Additionally, we show how their approach can be generalized to work for arbitrary homomorphic encryption schemes.
Note that their work, required the encryption schemes to have plaintext spaces that grow at least linearly with the desired upper bound on the error rate of their data structure.
We provide a detailed description of the improved compression scheme in~\autoref{app:encryptedcompression}.

\paragraph*{Rateless IBLTs.}
In a work subsequent to ours, Yang, Gilad, Alizadeh~\cite{YanGilAli24} consider the setting of rateless IBLTs.
Here an encoder has a fixed set of source symbols and would like to encode them into an infinite sequence of coded symbols.
Without going into detail, these coded symbols should have several high-level properties:
The computation of the coded symbols should not depend on a fixed a-priori threshold of how many source symbols will be in the data structure.
The sequence of generated coded symbols should be linear in the sense that two sequences of coded symbols can be subtracted to obtain a sequence of coded symbols that represents the set difference of the corresponding sets.
For any number of source symbols, one should be able to decode them back from a sufficiently long prefix of the sequence of coded symbols.

As noted by Yang, Gilad, Alizadeh, the IBLT of Goodrich and Mitzenmacher~\cite{All:GooMit11} does not satisfy these properties as the size of the data structure needs to be fixed at the start and there is no clear way of viewing it as a infinite sequence of coded symbols.
We will not prove this formally in our work, but note that our stacked IBLTs naturally have these properties, as they can be constructed starting from the smallest array and repeatedly building the larger arrays on top of it, viewing the array cells as coded symbols.

\subsection{Some More Related Works.}
A variant of IBLTs that may appear similar to ours are irregular IBLTs, as originally already suggested by Goodrich and Mitzenmacher~\cite{All:GooMit11} and also studied by L{\'a}zaro and Bal{\'a}zs Matuz~\cite{ISTC:LazMat21}, where different set elements are encoded using a different amount of hash functions. 
We note that our construction is regular, since it is oblivious to the specific value of any one set element and all elements get treated equally. 
We believe this to be helpful for applications, like encrypted compression, where the set elements are not visible to the encoder generating the data structure.

In a recent work, that appeared subsequent to ours, by Belazzougui, Kucherov, Walzer~\cite{ICALP:BelKucWal24}, the authors consider IBLTs with very small failure probabilities as we do here. The idea behind their construction is to augment the original IBLT of Goodrich and Mitzenmacher with a smaller backup stash data structure. When decoding of the main IBLT fails, their peeling resorts to recovering the missing elements from the stash.
In comparison, our stacked IBLTs can conceptually be seen as an iterative version of this idea, as we have a sequence of smaller and smaller ``stashes'', moving on to peeling the smaller ones, when peeling the bigger ones fails repeatedly.
Furthermore, their work considers fully random hash functions, whereas our work gets away with using hash functions with limited independence.
Their construction results in a sketch that is asymptotically comparable in size and has a better expected, but worse worst-case decoding time.

\subsection{Technical Contributions}
\label{sec:tech}
When analysing events involving hash functions of limited independence, one typically considers higher moments of a sum of random variables that each depends only on a constant number of hash values. For our Theorem~\ref{thm:limitind}, the natural random variables to consider would be the random variables $X_{i,j}$ taking the value $1$ if $h(x_i)=h(x_j)$. Clearly there are no more than $n/2$ indices $i$ such that there exists $j \neq i$ with $h(x_i) = h(x_j)$ if $\sum_{i \neq j} X_{i,j} \leq n/2$. To upper bound $\Pr[\sum_{i\neq j} X_{i,j} > n/2]$, we raise both sides of the inequality to the $k$'th power and use that $\Pr[\sum_{i\neq j} X_{i,j} > n/2] = \Pr[(\sum_{i\neq j} X_{i,j})^k > (n/2)^k]$. Using Markov's inequality, this probability is at most $\E[(\sum_{i \neq j} X_{i,j})^k]/(n/2)^k$. Expanding the $k$'th power of the sum into a sum of monomials and using linearity of expectation, we have $\E[(\sum_{i\neq j} X_{i,j})^k] = \sum_{T \in \{(i,j) : i\neq j\}^k} \E[\prod_{(i,j) \in T} X_{i,j} ]$. Since each product depends on at most $2k$ hash values, and $h$ is $2k$-wise independent, we can analyse each monomial as if $h$ was truly random.

For the purpose of proving our theorem, this approach actually suffices to establish the theorem for $k < \sqrt{n}$. However, for our application in IBLTs we need the theorem to hold for $k$ up to $\Omega(n)$. The problem is that as $k$ approaches $n$, using that $\sum_{i\neq j} X_{i,j}$ is small as a proxy for having many elements hash to a unique position is lossy. In essence, this is because $\ell$ elements hashing to the same value contributes around $\ell^2$ to $\sum_{i\neq j} X_{i,j}$ whereas it actually only corresponds to $\ell$ elements not hashing to a unique value. For this reason, $\E[(\sum_{i\neq j} X_{i,j})^k]$ is simply too large to give a meaningful bound from Markov's inequality when $k = \Omega(\sqrt{n})$. In fact, it is not only the higher-moments method that is doomed, but any approach based on arguing that $\Pr[\sum_{i\neq j} X_{i,j} > n/2]$ is small will fail. Consider for instance the case where $k$ is $\Theta(n)$. Our Theorem~\ref{thm:limitind} shows that the probability that less than $n/2$ keys hash uniquely is $\exp(-\Omega(n))$. If we consider $\sum_{i\neq j} X_{i,j}$ and even assume that $h$ is truly random, then the probability that the first $n/\lg n$ keys all hash to the first $n/\lg^3 n$ entries is $(C\lg^3 n)^{-n/\lg n} \geq \exp(-O(n \lglg n/\lg n))$ for constant $C>0$. But when this happens, we have $\sum_{i \neq j} X_{i,j} \geq (n/\lg^3 n) 2 \binom{\lg^2n}{2} \approx n \lg n$. That is, $\Pr[\sum_{i\neq j} X_{i,j} > n/2] \geq \exp(-O(n \lglg n/\lg n))$.

In light of this, it is not a priori clear which random variables 
are sensible to analyse, keeping in mind that they should depend on
only few hash values (for the sake of limited independence) and yet
accurately capture the event that at least $n/2$ elements hash to a
unique value. We present two alternative proofs circumventing this barrier.

In the first, and completely self-contained proof, we carefully
define random variables $Y_{i,j}$ that actually depend \emph{on all}
hash values. We then consider the $k$'th moment of a sum involving
these $Y_{i,j}$'s and argue that most monomials are $0$ due to the special definition of the $Y_{i,j}$'s. Now that there are only very few non-zero monomials left, we upper bound our $Y_{i,j}$'s by the $X_{i,j}$'s above, bringing us back into a setup with monomials depending on at most $2k$ hash values. Compared to going directly from the $X_{i,j}$'s, what we win is that there are much fewer monomials left in the sum. The initial pruning of monomials using the more involved $Y_{i,j}$'s is a key technical innovation that we have not seen before and believe may be an inspiration in future work analysing random variables of limited independence.

In the second proof, we invoke a previous theorem on $k$-wise independence fooling combinatorial rectangles~\cite{rectangles, v016a017}. This proof is shorter than the first, but relies on the heavy lifting done in previous works and does not yield the explicit small constant in our theorem.

%
%


\section{Preliminaries}\label{sec:preliminaries}

Let $X,Y$ be sets, we denote by $\abs{X}$ the size of $X$ and by $X \ssdiff Y$ the symmetric set difference of $X$ and $Y$, i.e., $X \ssdiff Y = (X \cup Y)\setminus (X\cap Y) = (X \setminus Y)\cup (Y\setminus X)$.
We write $x \gets X$ to denote the process of sampling a uniformly random element $x \in X$.
Let $\vec{v} \in X^n$ be a vector.
We write $v_i$ to denote its $i$-th component.
Let $\vec{M} \in X^{n \times m}$ be a matrix.
We write $\vec{M}[i,j]$ to denote the cell in the $i$-th row and $j$-th column.
We write $[n]$ to denote the set $\{1, \dots, n\}$.
We write $\lg$ without a specified base to denote the logarithm to base two.


\section{Hashing Uniquely with Limited Independence}\label{sec:hashing}

In this section, we prove our main technical result, Theorem~\ref{thm:limitind}, which we restate here for convenience.
\begin{customTheorem}{2}[restated]
Let $x_1,\dots,x_n \in U$ be a set of $n$ distinct keys from a
universe $U$ and let $h : U \to [Cn]$ be a hash function drawn from a
$2k$-wise independent family of hash functions.  If $C \geq 8e$, then
with probability at least $1-4 \cdot (8e/C)^{\min\{k,n/C\}}$ it holds that there are no more than $n/2$ indices $i$ such that there exists a $j \neq i$ with $h(x_i) = h(x_j)$.
\end{customTheorem}
As discussed in Section~\ref{sec:tech}, the straight forward approach of analysing moments of a sum $\sum_{i<j} X_{i,j}$ with $X_{i,j}$ being an indicator for $h(x_i)=h(x_j)$, does not give the desired result. In essence, this is because a collision of $\ell$ elements contributes roughly $\ell^2$ to the sum.

In this section, we present two alternative proofs circumventing this barrier. We start by giving the self-contained proof that introduces an elegant new trick to analysing $k$-wise independent random variables. We then give a proof invoking results on $k$-wise independence fooling combinatorial rectangles. The remark that the second proof does not yield the explicit constants in Theorem~\ref{thm:limitind}.

\subsection{Proof via Moments}
Our first step in the proof of Theorem~\ref{thm:limitind} is thus to make a far less obvious definition of random variables.

\begin{proof}
Define random variables $Y_{i,j}$ with $i\neq j$ taking the value $1$ if $h(x_i) = h(x_j)$ and furthermore, for all $a$ with $\min\{i,j\} < a < \max\{i,j\}$ we have $h(x_i)\neq h(x_a)$. Otherwise, $Y_{i,j}$ takes the value $0$. Observe that if elements $x_{i_1},\dots,x_{i_\ell}$ are all those that hash to a concrete value $v$, and $i_1 < i_2 < \cdots < i_\ell$, then $Y_{i_1,i_2}=Y_{i_2,i_1}=Y_{i_2,i_3}=\cdots = Y_{i_\ell, i_\ell-1} = 1$ and all other $Y_{i,j}$'s with $i$ or $j$ in $\{i_1,\dots,i_\ell\}$ are zero. The random variable $Y_{i,j}$ is thus $1$ if $x_i$ and $x_j$ hash to the same $v$, and furthermore, $i$ and $j$ are consecutive in the sorted order of all elements hashing to $v$. Critically, a collision of $\ell$ elements contribute only $2\ell-2$ to $\sum_{i\neq j} Y_{i,j}$. On the negative side, these random variables $Y_{i,j}$ clearly depend on more than two hash values unlike the $X_{i,j}$'s.

Letting $S = \{x_1,\dots,x_n\}$, observe that if there more than $n/2$ keys $x \in S$ such that there is a $y \in S \setminus\{x\}$ with $h(x)=h(y)$, then $\sum_{i \neq j} Y_{i,j} > n/2$.  Let $r = \min\{k,n/C\}$. Using Markov's, we get
\begin{eqnarray}
  \label{eq:markov}
  \Pr\left[\sum_{i \neq j} Y_{i,j} > n/2\right] = \Pr\left[\left(\sum_{i \neq j} Y_{i,j}\right)^r > (n/2)^r \right] < \frac{\E\left[\left(\sum_{i \neq j} Y_{i,j}\right)^r\right]}{(n/2)^r}.
\end{eqnarray}
We thus focus on bounding $\E[(\sum_{i \neq j} Y_{i,j})^r]$. Expand it into its monomials
 \[
    \E\left[\left(\sum_{i \neq j} Y_{i,j} \right)^r \right] = \sum_{(i_1,j_1),\dots,(i_r,j_r)} \E \left[ \prod_{h=1}^r Y_{i_h, j_h} \right].
  \]
Here the sum ranges over all lists of $r$ pairs $(i_h,j_h)$ with $i_h \neq j_h$. Notice that the product is $1$ if and only if all the indicators involved are $1$. For a monomial $\prod_{h=1}^r Y_{i_h,j_h}$, think of the pairs $(i_h,j_h)$ as edges of a graph with the elements $x_1,\dots,x_n$ as nodes. The critical observation is that if any node in this graph has at least three distinct neighbors, then $\prod_{h=1}^r Y_{i_h, j_h}=0$. To see this, assume the node $x_i$ has at least three distinct neighbors. If $x_i$ has two neighbors $x_{j_1},x_{j_2}$ with $j_1 < j_2 < i$, then we cannot have both $Y_{j_1, i} = Y_{i,j_1} = 1$ and $Y_{j_2,i} = Y_{i,j_2}=1$. This is because, by definition, $Y_{j_1,i}$ can only be $1$ if there are no elements $x_a$ with $h(x_a) = h(x_{j_1})$ and $j_1 < a<i$. But $a=j_2$ is an example of such an element when we also require $Y_{j_2,i}=Y_{i,j_2}=1$. A similar argument applies to the case that $x_i$ has two neighbors $x_{j_1},x_{j_2}$ with $i < j_1 < j_2$. Notice that this also implies that the monomial is $0$ if the corresponding graph has a cycle since the node of largest index on the cycle has an edge to two distinct neighbors of lower index. In combination, the monomial can only be non-zero if the corresponding edges form connected components corresponding to paths (possibly with duplicate edges).

Let $\cG^r$ denote the set of all ordered lists $\cL$ of $r$ pairs $\cL := (i_1,j_1),\dots,(i_r,j_r)$ (with $i_h \neq j_h$ for all $h$) such that every connected component in the corresponding graph $G(\cL)$ forms a path. Then 
\[
  \E\left[\left(\sum_{i \neq j} Y_{i,j} \right)^r \right] = \sum_{\cL \in \cG^r} \E \left[ \prod_{(i,j) \in \cL} Y_{i,j} \right].
  \]
  Now consider a monomial $\prod_{(i,j) \in \cL} Y_{i, j}$ for an $\cL \in \cG^r$. Define $X_{i,j}$ as the random variable taking the value $1$ if $h(x_i) = h(x_j)$ and $0$ otherwise. Here we use that $Y_{i,j} \leq X_{i,j}$ and thus $\prod_{(i,j) \in \cL} Y_{i, j}  \leq \prod_{(i,j)\in \cL} X_{i, j}$. Therefore
    \[
  \E\left[\left(\sum_{i \neq j} Y_{i,j} \right)^r \right] \leq \sum_{\cL \in \cG^r} \E \left[ \prod_{(i,j) \in \cL} X_{i,j} \right].
  \]
What we have achieved is to upper bound $\E[(\sum_{i < j} Y_{i,j})^r]$ by the contribution from monomials corresponding to graphs consisting of paths. Furthermore, for these monomials, we have replaced the $Y_{i,j}$ variables by the simpler $X_{i,j}$ variables that each only depend on two hash values. This allows us to handle the limited independence of $h$.

Next, we bound $\E[ \prod_{(i,j) \in \cL} X_{i,j}]$ for an $\cL \in \cG^r$. With the graph interpretation $G(\cL)$ of $\cL$ in mind, we observe that the product is $1$ if and only if, for every connected component in $G(\cL)$, all nodes in the component hash to the same value. Furthermore, the monomial depends on at most $2r \leq 2k$ hash values and thus the random variables behave as if $h$ was truly random. For a connected component with $q_i$ nodes, the probability all nodes hash to the same is precisely $(Cn)^{-(q_i - 1)}$. If the total number of nodes in $G(\cL)$ having at least one neighbor is $q$ and the total number of connected components in $G(\cL)$ formed by these nodes and their edges is $c$, then
  \[
    \E \left[ \prod_{(i,j) \in \cL}X_{i, j} \right] = (Cn)^{-q +c}.
  \]
  For every $q \leq 2r$ and every $c \leq q/2$, let $\cG^r_{q,c} \subseteq \cG^r$ be the subset of lists $\cL$ for which the corresponding graph $G(\cL)$ has $c$ non-singleton connected components and those connected components together have $q$ nodes. Then
  \[
    \E\left[\left(\sum_{i \neq j} Y_{i,j} \right)^r \right] \leq \sum_{q=2}^{2r} \sum_{c =1}^{q/2}  \sum_{\cL \in \cG^r_{q,c}} \E \left[ \prod_{(i,j) \in \cL} X_{i,j} \right] = \sum_{q=2}^{2r} \sum_{c =1}^{q/2}  |\cG^r_{q,c}|(Cn)^{-q+c}.
  \]
  We thus need to bound $|\cG^r_{q,c}|$. Here we show the following
  \begin{lemma}
    \label{lem:count}
    For all $q \leq 2r$, $c \leq q/2$ it holds that
    \[
      |\cG^r_{q,c}| \leq \left(\frac{4er}{q}\right)^{q-c} 2^r q^r n^q q^{-c}.
    \]
  \end{lemma}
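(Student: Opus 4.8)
The plan is to count the lists $\cL \in \cG^r_{q,c}$ by encoding each such list with a bounded amount of information. Recall that $\cL$ is an ordered list of $r$ pairs $(i_h,j_h)$ with $i_h\neq j_h$, and the underlying (multi-)graph $G(\cL)$ has $c$ non-singleton connected components, which by the earlier analysis are all paths, spanning $q$ labelled vertices in total. First I would fix the combinatorial ``shape'' of the union of paths: a disjoint union of $c$ paths on $q$ vertices is, up to isomorphism, determined by an unordered partition of $q$ into $c$ positive parts, and each path on $p$ vertices has exactly $p-1$ edges, so the edge set of $G(\cL)$ as a \emph{set} has size $q-c$. The number of such shapes is small (at most $2^{q}$, say, bounded by the number of compositions), and this factor will be absorbed comfortably.

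Next, I would choose which of the $n$ keys play the roles of the $q$ vertices and how they are arranged. A cleaner route is the following: every edge of $G(\cL)$ appears at least once among the $r$ listed pairs, so I first select, for each of the $q-c$ distinct edges, one ``representative'' position among $1,\dots,r$ where it occurs; there are at most $r^{q-c}$ ways (actually $\binom{r}{q-c}(q-c)! \le r^{q-c}$). At these $q-c$ representative positions I must reveal enough to pin down the whole path forest. Walking the $c$ paths in order and reading off their vertices contributes a factor of at most $q \cdot n^{q-c}$ or so — but more carefully, it is cleanest to bound the number of ways to assign the $q$ labels to the vertex-slots of the chosen shape by $n^q$ and accept the resulting slack, which matches the $n^q$ in the statement. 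Finally, the remaining $r-(q-c)$ positions of the list must each be filled with one of the edges already present in $G(\cL)$ (no new edge may appear, by definition of $\cG^r_{q,c}$); since there are $q-c$ distinct edges, there are at most $(q-c)^{\,r-(q-c)}$ ways to do this. Multiplying, $|\cG^r_{q,c}| \le 2^q \cdot r^{q-c} \cdot n^{q} \cdot (q-c)^{\,r-q+c}$, and the task is then to massage this into the claimed bound $\left(\tfrac{4er}{q}\right)^{q-c} q^r n^q q^{-c}$.

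The bookkeeping in that last massaging step is where I expect the real work — and the main obstacle — to lie: one must show $2^q (q-c)^{\,r-q+c} r^{q-c} \le (4er/q)^{q-c} q^{r-c}$, i.e.\ roughly that $(q-c)^{\,r-q+c}\,2^q \le (4e/q)^{q-c} q^{r-c}$. Using $q-c \le q$ and $c \le q/2$ (so $q-c \ge q/2$), together with the inequality $\binom{r}{q-c}\le (er/(q-c))^{q-c} \le (2er/q)^{q-c}$, the factor of $4e$ rather than $2e$ buys exactly the headroom to absorb the $2^q$ term, since $2^q = (2^{q/(q-c)})^{q-c} \le 4^{q-c}$. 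I would organise the counting from the start in terms of $\binom{r}{q-c}$ and the multinomial for placing the remaining $r-q+c$ repeated edges, rather than the crude $r^{q-c}$, precisely so that these constants line up; the path structure (each component is a path, hence acyclic with exactly one fewer edge than vertices) is what makes ``$q-c$ distinct edges'' exact and keeps the exponents tight. Care is also needed for the degenerate cases $q-c=0$ (impossible here since $q\ge 2, c\le q/2$) and $c$ near $q/2$, which is handled by the $q^{-c}$ factor on the right.
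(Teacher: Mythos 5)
Your high-level plan (an injective encoding of each $\cL\in\cG^r_{q,c}$, then simplify the resulting count) is the same strategy the paper uses, and the separate pieces you identify — first-occurrence positions, copies, vertex labels — are all real. But there is a genuine gap in the ``massaging'' step, and it is not the bookkeeping detail you flag; it is a missing power of $q$. After applying $\binom{r}{q-c}\le(er/(q-c))^{q-c}\le(2er/q)^{q-c}$ and $q-c\le q$, your bound $\binom{r}{q-c}\,2^q\,n^q\,(q-c)^{r-q+c}$ is at most $(2er/q)^{q-c}\,2^q\,q^{r-q+c}\,n^q$, and the ratio of the target $(4er/q)^{q-c}q^{r-c}n^q$ to this is $2^{-c}q^{q-2c}$. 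You correctly note that $2^q\le 4^{q-c}$ covers the $2^q$, but you still need $q^{q-2c}\ge 2^c$ (equivalently, with your own formulation, you need $q^c\le e^{q-c}$ after absorbing the $2^q$). This fails badly: for $c=q/2$ you would need $1\ge 2^{q/2}$, and even for $c=q/2-1$ it breaks once $q$ is around $20$. Switching from $r^{q-c}$ to $\binom{r}{q-c}$ as you suggest does not repair this; it gains at most the $(q-c)!$ you would anyway have to spend to say which of the $q-c$ distinct edges sits at which first-occurrence slot, and that makes the count worse, not better.

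What the paper does differently — and what your ``fix the shape ($\le 2^q$), then label the vertices ($n^q$)'' decomposition structurally cannot reproduce — is to encode the graph and its embedding into $\cL$ jointly, by walking the $2(q-c)$ endpoint slots of the first-occurrence edges in list order: choose which $q$ of those $2(q-c)$ slots are first appearances of a node ($\binom{2(q-c)}{q}$), name those $q$ nodes ($n^q$), and resolve the remaining $2(q-c)-q=q-2c$ slots by a pointer into $[q]$, giving a factor $q^{q-2c}$. That $q^{q-2c}$ term is exactly where the $q^{-c}$ slack in the lemma statement comes from; in your scheme there is nothing that produces it, which is why the final inequality cannot close for $c$ near $q/2$. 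This joint encoding also sidesteps the need to separately specify which edge of a pre-chosen shape lands at which first-occurrence position, which is the other place your decomposition leaks. Finally, your encoding as written also omits the per-occurrence orientation bit (whether $(i,j)$ or $(j,i)$ appears), so the map from codewords to $\cG^r_{q,c}$ is not surjective; this is a smaller issue, but worth including since $\cL$ is a list of ordered pairs.
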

  Before we prove the lemma, let us use to finish our proof of Theorem~\ref{thm:limitind}. Continuing our calculations above using Lemma~\ref{lem:count}, we have that
  \begin{eqnarray*}
    |\cG^r_{q,c}|(Cn)^{-q+c} &\leq&\left(\frac{4er}{qCn}\right)^{q-c} 2^r q^r n^q q^{-c}\\
    &=& \left(\frac{4er}{qC}\right)^{q} \left(\frac{4er}{Cn}\right)^{-c} 2^r q^r.
  \end{eqnarray*}
  Since we set $r = \min\{k, n/C\}$ and require $C \geq 8e$, we have $(4er/(Cn)) \leq 1/4$ and thus exploiting that the sum over $c$ is a geometric series we get
  \[
    \sum_{c=1}^{q/2}  |\cG^r_{q,c}|(Cn)^{-q+c} \leq 2 \left(\frac{4er}{qC}\right)^{q} \left(\frac{4er}{Cn}\right)^{-q/2} 2^r q^r = 2 \left( \frac{4e r n}{Cq^2} \right)^{q/2}2^r q^r
  \]
  Using again that $n/C \geq r$ and $r \geq q/2$, we have $4er n/(Cq^2) \geq 4e r^2/q^2 \geq e$ and thus we may again use a geometric series to conclude
  \[
    \E\left[\left(\sum_{i \neq j} Y_{i,j} \right)^r \right]  \leq \sum_{q=2}^{2r} \sum_{c=1}^{q/2}  |\cG^r_{q,c}|(Cn)^{-q+c} \leq 4 \left( \frac{4e r n}{C(2r)^2} \right)^{r}(4r)^r = 4 \left(\frac{4e n}{C}\right)^r.
  \]
  Plugging this back into the bound~\eqref{eq:markov} we got from Markov's inequality, we finally conclude
  \[
    \Pr\left[\sum_{i \neq j} Y_{i,j} > n/2\right] \leq 4 \cdot \left(\frac{8e}{C}\right)^r.
  \]
  Recalling that $r = \min\{k,n/C\}$ completes the proof.
\end{proof}
  
\paragraph*{Counting Graphs (Proof of Lemma~\ref{lem:count}).}
To bound $|\cG^r_{q,c}|$, we first recall that every $\cL \in \cG^r_{q,c}$ corresponds to a graph consisting of $c$ non-singleton connected components, each forming a path of $q_i$ nodes with $q = \sum_i q_i$. The set of (undirected) edges in $G(\cL)$ thus has cardinality $q-c \leq r$. We now argue that any $\cL \in \cG^r_{q,c}$ can be uniquely described by an element in
\[
  \cU := \binom{r}{q-c} \times (\{0,1\} \times [q-c])^{r-(q-c)} \times \binom{2(q-c)}{q} \times [n]^q \times [q]^{2(q-c)-q}.
\]
Here $\binom{r}{q-c}$ is the set of all $(q-c)$-sized subsets of a universe of cardinality $r$. Notice that this indirectly specifies a surjective function from $\cU$ to $\cG^r_{q,c}$ and thus
\[
  |\cG^r_{q,c}| \leq \binom{r}{q-c} (2(q-c))^{r-(q-c)} \binom{2(q-c)}{q}  n^q q^{q-2c}.
\]
To describe an $\cL \in \cG^r_{q,c}$ with an element from $\cU$, use an element in $\binom{r}{q-c}$ to specify the first occurence of each edge in $\cL$ (where an edge $(i,j)$ is first if neither $(i,j)$ or $(j,i)$ occurs earlier in $\cL$). For each of the $r-(q-c)$ remaining edges in order, use an element in $\{0,1\} \times [q-c]$ to specify it as a copy of one of the $q-c$ first edges, where $\{0,1\}$ indicates whether to reverse the order of the end points. Next observe that the $q-c$ first edges have $2(q-c)$ end points of which precisely $q$ are unique. Specify the first occurence of each unique node on these edges using an element in $\binom{2(q-c)}{q}$. Next use an element in $[n]$ for each such node in order to specify it among the nodes $x_1,\dots,x_n$. Finally, for the remaining $2(q-c)-q$ end points, specify them as an index into the $q$ first occurrences of unique nodes. This information uniquely describes $\cL$.

Using that $\binom{2(q-c)}{q}  \leq 2^{2(q-c)}$ and the general inequality $\binom{r}{q-c} \leq (er/(q-c))^{q-c}$, we conclude
\begin{eqnarray*}
  |\cG^r_{q,c}| &\leq& \left(\frac{er}{q-c} \right)^{q-c} (2(q-c))^{r-(q-c)} 2^{2(q-c)} n^q q^{q-2c} \\
                &\leq& \left(\frac{2er}{q} \right)^{q-c} (2q)^{r-(q-c)} 2^{2(q-c)} n^q q^{q-2c} \\
  &=& \left(\frac{4er}{q}\right)^{q-c} 2^r q^r n^q q^{-c}.
\end{eqnarray*}
\qed

Let us finish by commenting on our choice of bounding $\sum_{i \neq j} Y_{i,j}$ rather than $\sum_{i < j} Y_{i,j}$. This choice was made for simplicity, but one may wonder whether focusing on the latter might result in tighter constants. This does not seem to be the case, as then the assumption that there are more than $n/2$ keys $x \in S$ such that there is a $y \in S\setminus\{x\}$ with $h(x)=h(y)$, does not imply $\sum_{i<j} Y_{i,j} > n/2$ (we use $\sum_{i \neq j} Y_{i,j} > n/2$), but only $\sum_{i<j} Y_{i,j} > n/4$. We would thus lose a constant factor in Markov's.


\section{Proof via $k$-Wise Independence Fools Combinatorial Rectangles}\label{app:rectanglesproof}
We now give a second proof based on $k$-wise independence fooling combinatorial rectangles. This proof was communicated to us by an anonymous reviewer.

We first introduce the notion of a combinatorial rectangle. A combinatorial rectangle is a function $f : [m]^n \to \{0,1\}$ which is specified by $n$ coordinate functions $f_i : [m] \to \{0,1\}$ as $f(x_1,\dots,x_n) = \prod_{i \in m} f_i(x_i)$.
We now use the following result, typically attributed to~\cite{rectangles}, although we cannot directly find this statement in the version available online. A clean introduction to combinatorial rectangles and bounded independence can, for instance, be found in~\cite{v016a017}.
\begin{theorem}
  \label{thm:fool}
Let $X_1,\dots,X_n$ be $k$-wise independent random variables with uniform marginal distributions over $[m]$. Then there is a constant $a>0$ such that
  \[
    \left|\E_{X_1,\dots,X_n}\left[f(X_1,\dots,X_n)\right] - \E_{x \in [m]^n}\left[f(x)\right]\right| \leq e^{-a k},
  \]
  where $\E_{x \in [m]^n}$ denotes a uniform random $x \in [m]^n$.
\end{theorem}
With this tool in place, we now prove Theorem~\ref{thm:limitind}.
\begin{proof}
  Recall that we are hashing into $Cn$ bins. Let $x_1,\dots,x_n \in U$ denote the $n$ keys and let $h : U \to [Cn]$ denote a hash function drawn randomly from a $2k$-wise independent family of hash functions. Let $X_i$ be the random variable taking the value $h(x_i)$.

  Let $J \subseteq [Cn]$ be the indices of a subset of the bins, with $|J|=t$ for a parameter $t$ to be determined. Define random variables $Z_j$ taking the value $1$ if no element hashes to the value $j$ and $0$ otherwise. The probability that all bins indexed by $J$ are empty is $\E[\prod_{j \in J} Z_j]$. If we now define functions $f_i : [Cn] \to \{0,1\}$ taking the value $1$ on $x \notin J$ and the value $0$ for $x \in J$, we have that $\prod_{j \in J} Z_j = \prod_{i=1}^n f_i(X_i)$, i.e. $\prod_{j \in J} Z_j$ is in effect a combinatorial rectangle. By Theorem~\ref{thm:fool}, we have
  \[
    \E\left[\prod_{j \in J} Z_j\right] \leq \E_{x \in [Cn]^n}\left[f(x)\right] + e^{-ak}.
  \]
  But $\E_{x \in [Cn]^n}[f(x)] = (1-t/Cn)^n \leq e^{-t/C}$. We now require $t < aCk$ and conclude $\E[\prod_{j \in J} Z_j] \leq 2e^{-t/C}$.

  Next, observe that if there are less than $n/2$ elements that hash to a unique value, then the number of occupied bins is at most $3n/4$. Vice versa, the number of unoccupied bins is at least $Cn-3n/4$. If we also have $t < Cn-3n/4$, then we may bound the expected number of $t$-sized subsets of bins that are empty. That is, if we let $Y_J = \prod_{j \in J} Z_j$, then we have just shown
  \[
    \E\left[\sum_{J \in \binom{Cn}{t}} Y_J\right] \leq \binom{Cn}{t} 2e^{-t/C}.
  \]
  On the other hand, we may also lower bound the expectation by
  \[
     \E\left[\sum_{J \in \binom{Cn}{n}} Y_J\right]  \geq \Pr\left[\sum_j Z_j > Cn-3n/4\right] \binom{Cn-3n/4}{t}.
   \]
   Combining the two yields
   \begin{eqnarray*}
     \Pr\left[\sum_j Z_j > Cn-3n/4\right]  &\leq& 2e^{-t/C}\cdot \frac{ \binom{Cn}{t}}{\binom{Cn-3n/4}{t}} \\
                                           &\leq& 2 \cdot \left(\frac{e^{-1/C}(Cn-t)}{Cn-3n/4-t} \right)^t \\
                                           &=& 2 \cdot \left(e^{-1/C}\left(1 + \frac{3}{4 (C-3/4-t/n)} \right)\right)^t \\
     &\leq& 2 \cdot \left(e^{-1/C + (3/4) \cdot 1/(C-3/4-t/n)}\right)^t
   \end{eqnarray*}
   If we require $t < n$ and $C$ at least a sufficiently large constant, then this is $\exp(-\Omega(t/C))$. Setting $t = \min\{aCk, n\}$ completes the proof.
\end{proof}


\section{Smaller IBLTs with Limited Independence}\label{sec:construction}

In this section, we present a new construction of IBLTs, which we call
stacked IBLTs, that is both asymptotically smaller and requires less
randomness (in \autoref{sec:lower} we also argue that the analysis
of the original IBLT cannot be strengthened to give bounds comparable
to our stacked IBLT).

\subsection{Stacked IBLTs}\label{sec:basicconstruction}

In this section we introduce our new Stacked IBLTs that are more space efficient and allow for a lower randomness complexity.
Essentially the construction consists of $\lg n$ stacked smaller IBLTs.
These IBLTs will be decoded in order and each is sized, such that we will be able to prove that it allows decoding at least half the remaining entries.
This means that after decoding all $\lg n$ IBLTs, at most a single element is left to decode which can then be trivially decoded.

\begin{figure}[t]\centering
  \begin{pchstack}[boxed,center]
    \begin{pcvstack}
    \procedure{$\init(\vec{h})$}{
      \pcfor  0 \leq i < \lg(n) - \lg(\tau)\\
      \quad T_i := \basicinit(1, \lceil C n2^{-i}\rceil, \vec{h}_i)\\
      \pcfor  0 \leq i < \lg(\tau)\\
      \quad i' := \lfloor\lg(n) - \lg(\tau)\rfloor +i\\
      \quad T_{i'} := \basicinit(2^{i}, \lceil C \tau 2^{-i}\rceil, \vec{h}_{i'})\\
      \pcreturn (T_0,\dots,T_{\lceil\lg n\rceil-1})
    }
    \pcvspace
    \procedure{$\encode((T_0,\dots,T_{\lceil\lg n\rceil-1}), S, \vec{h})$}{
      \pcfor 0 \leq i < \lceil\lg n\rceil\\
      \quad T_i := \basicencode(S,\vec{h}_i)\\
      \pcreturn (T_0,\dots,T_{\lceil\lg n\rceil-1})
    }
  \end{pcvstack}
  \pchspace
  \begin{pcvstack}
    \procedure{$\decode((T_0,\dots,T_{\lceil\lg n\rceil-1}),\vec{h})$}{
      S' := \emptyset\\
      \pcfor 0 \leq i < \lceil\lg n\rceil\\
      \quad T_i := \basicpeel(T_i,S')\\
      \quad S' := S' \cup \basicdecode(T_i)\\
      \pcreturn S'\\[2.3pt]
    }
    \pcvspace
    \procedure{$\peel((T_0,\dots,T_{\lceil\lg n\rceil-1}), \tilde S, \vec{h})$}{
      \pcfor 0 \leq i < \lceil\lg n\rceil\\
      \quad T_i := \basicpeel(\tilde S,\vec{h}_i)\\
      \pcreturn (T_0,\dots,T_{\lceil\lg n\rceil-1})
    }
  \end{pcvstack}
  \end{pchstack}

  \caption{Our stacked IBLT construction using basic IBLTs as specified in~\autoref{fig:basicfilter} as a building block. We have that $\tau = C_0 \lg(1/\delta)$ for a sufficiently large constant $C_0 > 0$}\label{fig:stackedfilter}
\end{figure}

\begin{figure}\centering
  \begin{pchstack}[boxed,center]
    \begin{pcvstack}
    \procedure{$\basicinit(\rho, \gamma, \vec{h})$}{
      \vec{K} := 0^{\rho\times \gamma}\\
      \vec{V} := 0^{\rho\times \gamma}\\
      \vec{C} := 0^{\rho\times \gamma}\\
      \pcreturn (\vec{K},\vec{V},\vec{C})\\\\[-2pt]
    }\pcvspace
    \procedure{$\basicencode((\vec{K}, \vec{V}, \vec{C}), S, \vec{h})$}{
      \pcforeach (k,v) \in S\\
      \quad \pcforeach i \in [\rho]\\
      \quad \quad j := h_i(k)\\
      \quad \quad \vec{K}[i,j] := \vec{K}[i,j] + k\\
      \quad \quad \vec{V}[i,j] := \vec{V}[i,j] + v\\
      \quad \quad \vec{C}[i,j] := \vec{C}[i,j] + 1\\
      \pcreturn (\vec{K},\vec{V},\vec{C})
    }
    \end{pcvstack}
	\pchspace
	\begin{pcvstack}
	  \procedure{$\basicdecode((\vec{K}, \vec{V}, \vec{C}),\vec{h})$}{
      S' := \emptyset\\
      \pcfor (i,j) \in [\rho]\times[\gamma]\\
      \quad\pcif \vec{C}[i,j]=1\\
      \quad \quad (k,v) := (\vec{K}[i,j],\vec{V}[i,j])\\
      \quad \quad S' := S'\cup \{(k,v)\}\\
      \pcreturn S'
    }
    \pcvspace
    \procedure{$\basicpeel((\vec{K},\vec{V}, \vec{C}), \tilde S, \vec{h})$}{
      \pcforeach (k,v) \in \tilde S\\
      \quad \pcforeach i \in [\rho]\\
      \quad \quad j := h_i(k)\\
      \quad \quad \vec{K}[i,j] := \vec{K}[i,j] - k\\
      \quad \quad \vec{V}[i,j] := \vec{V}[i,j] - v\\
      \quad \quad \vec{C}[i,j] := \vec{C}[i,j] - 1\\
      \pcreturn (\vec{K},\vec{V},\vec{C}, \vec{h})
    }
    \end{pcvstack}
  \end{pchstack}

  \caption{A simplified version of a basic IBLT for key space $\keyspace$ and universe $\universe$.
  Both $\langle\keyspace,+\rangle$ and $\langle\universe,+\rangle$ need to form groups.
  The basic IBLT requires a number of rows $\rho$, a number of columns $\gamma$ and a vector of hash functions $\vec{h} \in \{h : \keyspace \to [\gamma]\}^\rho$ to initialize.}
  \label{fig:basicfilter}
\end{figure}

Let $n$ be the threshold for an IBLT and $\delta > 0$ a desired failure probability. We can think of our Stacked IBLT as consisting of multiple rows, with a $k$-wise independent hash function associated with each row for $k = \Theta(\lg(\lg(n)/\delta))$. An element is hashed into one position in each row and stored there, like in the classic IBLT.
The key novelty of our solution is that the number of entries per row varies.
Moreover, while a classic IBLT focuses on peeling all elements, our analysis is based on peeling a constant fraction of the elements from each row.

More formally, let $\tau = C_0 \lg(1/\delta)$ for a sufficiently large constant $C_0 > 0$ and assume first that $n \geq \tau$. For $i = 0,\dots,\lg(n/\tau)$, our IBLT has one row $R_i$ with $C n 2^{-i}$ entries. Here $C > 0$ is a sufficiently large constant, where $C=8e$ is provably sufficient. Finally, for $i=0,\dots,\lg(\tau)$, it has a group $G_i$ consisting of $2^i$ rows all with $C \tau 2^{-i}$ entries.
In case $n < C_0 \lg(1/\delta)$, our structure has a group $G_i$ of $2^i$ rows for every $i=\lg(\tau/n),\dots,\lg(\tau)$. In the group $G_i$, every row has $C \tau 2^{-i}$ entries.
The IBLT uses $\sum_{i=0}^{\lg(n/\tau)} C n 2^{-i} + \sum_{i=0}^{\lg(\tau)} C \tau = O(n + \lg(1/\delta)\lglg(1/\delta))$ space.
In the formal description of our Stacked IBLT construction, shown in~\autoref{fig:stackedfilter}, we do not explicitly distinguish between the rows $R_i$ and groups $G_i$, but rather view them as smaller IBLTs that we call $T_1, \dots, T_{\lg{n}}$.
For the analysis, however, distinguishing the smaller IBLTs with one row and those with multiple rows is helpful.

\begin{customTheorem}{1}[restated]\label{thm:mainiblt}
Given a threshold $n$, the Stacked IBLT supports $\encode$, $\peel$, and $\decode$ operations, where $\decode$ succeeds with probability $1-\delta$ if the number of key-value pairs is no more than $n$. Furthermore, it uses space $O(n + \lg(1/\delta)\lglg(1/\delta))$ and requires only $O(\lg(\lg(n)/\delta))$-wise independent hashing.
\end{customTheorem}
\begin{remark}
We note that a $k$-wise independent hash function from a universe $U$ to a universe of size $\gamma$ requires $O(k \lg(U))$ bits.
Since we require $O(\lg(n/\delta)$ such functions, we observe that the total number of random bits we need is $O(\lg(n/\delta)(\lg(1/\delta) + \lg\lg(n)) \lg(U))$ bits.
Regarding running times, the Insert and Delete operations both require $O(\lg(n/\delta))$ evaluations of a $O(\lg(\lg(n)/\delta))$-wise independent hash function, plus insertions in the table entries. The running time is dominated by the evaluations of the hash functions, for a total time of $O(k \lg(n/\delta))$ per element. 
 
\end{remark}
\begin{proof}[Proof of Theorem~\ref{thm:mainiblt}]
To analyse the probability that peeling succeeds, we focus on the case of $n \geq \tau$. The other case is just a special case.

To argue that peeling succeeds with high probability, we consider a very restrictive form of peeling and argue that even this process succeeds. Concretely, for $i=0,\dots,\lg(n/\tau)$, consider peeling all elements that land alone in $R_i$ (after having peeled elements landing alone in $R_j$ with $j < i$). Then, for $i=0,\dots,\lg(\tau)$ in turn, select the row of $G_i$ where most elements hash alone and peel those elements.
To prove that this process succeeds in peeling all elements with probability at least $1-\delta$, we define the events $E_i$ occuring if there are more than $n2^{-(i+1)}$ elements left after peeling from $R_0,\dots,R_i$. Similarly, define $F_i$ as the event that more than $\tau 2^{-(i+1)}$ elements remain after peeling from $R_0,\dots,R_{\lg(\tau)}, G_0,\dots,G_i$. We observe that if $F_{\lg(\tau)}$ does not occur, then there are no more than $1/2$ elements left, i.e. peeling succeeded.

The key step in our proof is to argue that the following two
inequalities hold:
\begin{eqnarray}
  \label{eq:peelsingle}
  \Pr[E_i \mid \cap_{j=0}^{i-1}\overline{E_j}] \leq \frac{\delta}{4 (\lg(n/\tau) - i+1)^2}.
\end{eqnarray}
and
\begin{eqnarray}
  \label{eq:peelmulti}
  \Pr[F_i \mid \cap_{j=0}^{\lg(n/\tau)} \overline{ E_j} \cap_{j=0}^{i-1}\overline{F_j}] \leq \delta^2/2.
\end{eqnarray}
Observe that these two are sufficient as
\begin{eqnarray*}
  \Pr[\overline{F_{\lg(\tau)}}] &\geq& \Pr[\cap_{j=0}^{\lg(n/\tau)} \overline{E_j} \cap_{j=0}^{\lg(\tau)}\overline{F_{j}}] \\
                                  &=& \prod_{i=0}^{\lg(n/\tau)} (1-\Pr[E_i \mid \cap_{j=0}^{i-1} \overline{E_j}]) \prod_{i=0}^{\lg(\tau)} (1-\Pr[F_i \mid \cap_{j=0}^{\lg(n/\tau)} \overline{ E_j} \cap_{j=0}^{i-1}\overline{F_j}]) \\
                                  &\geq& \prod_{i=0}^{\lg(n/\tau)}\left(1-\frac{\delta}{4(\lg(n/\tau) - i+1)^2} \right)\left(1-\delta^2/2\right)^{\lg(\tau)+1} \\
                                  &\geq& 1 - \sum_{i=0}^{\lg(n/\tau)} \frac{\delta}{4(i+1)^2} - \frac{(\lg(\tau)+1)\delta^2}{2} \\
                                  &\geq& 1 - \frac{\delta \pi^2}{24} - \frac{\delta}{2} \\
  &\geq& 1-\delta.
\end{eqnarray*}
We start by showing~\eqref{eq:peelsingle}. Observe that conditioned on $\cap_{j=0}^{i-1}\overline{E_j}$, we know that no more than $n2^{-i}$ elements remain after peeling from $R_0,\dots,R_{i-1}$. We may condition on an arbitrary such set as the hash functions across the rows are independent. So let $S$ be a set of at most $n2^{-i}$ elements. The probability that there are more than $n2^{-(i+1)}$ elements that do no hash alone in $R_i$ is clearly maximized when $|S|$ is $n2^{-i}$. Theorem~\ref{thm:limitind} gives us that this probability is at most $4 (8e/C)^{\min\{k/2, n2^{-i}/C\}}$. For $C \geq 16e$, this is at most $4 \cdot 2^{- \min\{k/2,n2^{-i}/C\}}$. Since $k = \Theta(\lg(\lg(n)/\delta))$, we have $2^{-k/2} < \delta/(4 \lg^2_2 n) \leq \delta/(4(\lg(n/\tau)-i+1)^2)$ for a big enough constant in the $\Theta$-notation. We also have $n2^{-i}/C = \tau 2^{\lg(n/\tau)-i}/C$. For big enough constant $C_0$ (in the definition of $\tau$), this is at least $2 \lg(1/\delta) (\lg(n/\tau)-i+1) + 2 \geq \lg(1/\delta) + 2\lg(\lg(n/\tau)-i+1))+2$ (and this is by a large margin) and we conclude $2^{-n2^{-i}/C} \leq (\delta/4)/(\lg(n/\tau) -i+1))^2$.

To show~\eqref{eq:peelmulti}, note again that conditioned on $\cap_{j=0}^{\lg(n/\tau)} \overline{E_j} \cap_{j=0}^{i-1} \overline{F_j}$, there are at most $\tau 2^{-i}$ elements left after peeling from $R_0,\dots,R_{\lg(n/\tau)}, G_0,\dots,G_{i-1}$. Again, condition on an arbitrary set $S$ of remaining elements. The probability of $F_i$ is clearly maximized if $|S| = \tau 2^{-i}$. We split the proof in two cases. First, assume $\tau 2^{-i} \geq 4 C$. Since each of the $2^i$ rows of $G_i$ have $C \tau 2^{-i}$ entries, and the rows have independent hash functions, it follows by Theorem~\ref{thm:limitind} and $C \geq 16e$, that
\[
  \Pr[F_i \mid  \cap_{j=0}^{\lg(n/\tau)} \overline{E_j} \cap_{j=0}^{i-1} \overline{F_j}] \leq \left(4 \cdot 2^{-\min\{k, \tau 2^{-i}/C\}}\right)^{2^i} \leq \left(2^{-\min\{k/2, \tau 2^{-i}/(2C)\}}\right)^{2^i} .
\]
Here the last inequality assumes $k = \Theta(\lg(\lg(n)/\delta))$ is at least a sufficiently large constant. We also use $\tau 2^{-i}/C - 2 \geq \tau 2^{-i}/C - \tau 2^{-i}/(2C)$. We clearly have $2^{-k/2} \leq \delta^2/2$ for a big enough constant in the $\Theta$-notation. We also have $(2^{-\tau 2^{-i}/(2C)})^{2^i} = 2^{-\tau/(2C)}$. This is again smaller than $\delta^2/2$ for big enough constant $C_0$ in the definition of $\tau = C_0 \lg(1/\delta)$. Finally, for the case where $|S|=\tau 2^{-i} < 4C$, we note that one row of $G_i$ has $C|S|$ entries and thus the expected number of elements that collide with another is no more than $|S|^2/(C|S|) = |S|/C$. By Markov's inequality, the probability that more than $|S|/2$ collide is no more than $2/C < 1/2$. By independence of the rows, the chance that peeling fails is at most $2^{-2^i}$. Since $\tau 2^{-i} < 4C$, we have $2^i \geq \tau/(4C) = C_0 \lg(1/\delta)/(4C)$. For $C_0$ a big enough constant, this implies $2^{-2^i} < \delta^2/2$.
\end{proof}

%


\section{Lower Bound on the Size of IBLTs}\label{sec:lower}

The original IBLT analysis by Goodrich and Mitzenmacher~\cite{All:GooMit11} shows that using truly random hash functions and space $\bigO{nk}$ one can achieve a failure probability of $\bigO{n^{-k+2}}$.
Stated in terms of $\delta$ and $n$, the space usage of their solution is thus $\Omega(n\lg_n(1/\delta))$.
One may wonder, whether their analysis is tight or whether one could prove that IBLTs actually only require $o(nk)$ space for a similar failure probability.

It turns out their space bound is essentially tight and can not be improved by much.
Assume we have an IBLT of size $m$ storing keys $k_1, \dots, k_n$.
Furthermore assume $h_1, \dots, h_k$ are perfectly random hash functions, which map each key to exactly $k$ distinct locations.
For an IBLT to be decodable, we must be able to find a cell with a count of one  at each step of the peeling process.
If $k n \geq c m \lg m$ for some sufficiently large constant $c$, then each cell will have at least $c\lg m$ elements in expectation and thus by Chernoff bound with high probability all cells have a count strictly larger than one.
Thus it must hold that $k n < c m \lg m$.
Consider two distinct keys that are inserted into the IBLT.
The probability that both keys are hashed into exactly the same cells is
\[
\binom{m}{k}^{-1} \geq \left(\frac{em}{k}\right)^{-k} \geq \left( \frac{en}{c\lg m}\right)^{-cm\lg m/n} \geq n^{-cm\lg m/n}.
\]
If we want the IBLT to be correct with probability at least $1 - \delta$, then it has to holds that 
\[
n^{-cm\lg m/n} \leq \delta
\]
and thus
\begin{align*}
&\frac{cm\lg m \lg n}{n} > \lg(1/\delta)\\
\iff &m\lg m > \frac{n\lg(1/\delta)}{c\lg n}.
\end{align*}
For this to hold, it must also hold that
\begin{align*}
&m \lg(n \lg(1/\delta)) > \frac{n\lg(1/\delta)}{c\lg n}\\
\iff & m > \frac{n\lg(1/\delta)}{c\lg(n) \lg(n \lg(1/\delta))}
\end{align*}
and thus it must be true that
\[
m > \frac{n\lg(1/\delta)}{c\lg^2(n \lg(1/\delta))} \geq \frac{n\lg_n(1/\delta)}{c\lg^2(n \lg(1/\delta))}
\]
for any choice of $n \geq 2$.


\section{Supporting Subtraction}\label{sec:supportingsubtraction}
Our IBLT can be made to support such an operation in a manner similar to the original IBLT construction.
As explained previously, we modify the basic IBLT from Section~\ref{sec:basicconstruction} to have an additional hash sum matrix $\vec{H}$ where the values $g(k)$ for keys $k$ for some appropriate hash function $g$ are added up.
During peeling both cells with a count of one or minus one can be peeled, whenever the hash of the key sum cell matches the hash stored in the hash sum cell.
These modification are described in \autoref{fig:stackedfilterwithsubstraction} and \autoref{fig:basicfilterwithsubtract}.
If $g$ is a fully random function, then it is straightforward to see that the modified construction will be correct. 
Using a function $g$ that requires little randomness is slightly more challenging.
We assume that $\keyspace \subseteq \ZZ_p$ for some prime $p$ and we use hash function $g_a(x) = a^x \bmod q$ for some sufficiently large prime $q > p$, which was already used by Mitzenmacher and Pagh~\cite{DC:MP18} in the context of IBLTs.
Such hash functions are useful due to the following lemma.

\begin{figure}\centering
\begin{pcvstack}[boxed]
  \begin{pchstack}
    \procedure{$\init'(\vec{h}\gamechange{,g})$}{
      \pcfor  0 \leq i < \lg(n) - \lg(\tau)\\
      \quad T_i := \gamechange{\basicinit'}(1, \lceil C n2^{-i}\rceil, \vec{h}_i\gamechange{,g})\\
      \pcfor  0 \leq i < \lg(\tau))\\
      \quad i' := \lfloor\lg(n) - \lg(\tau)\rfloor +i\\
      \quad T_{i'} := \gamechange{\basicinit'}(2^{i}, \lceil C \tau 2^{-i}\rceil, \vec{h}_{i'}\gamechange{,g})\\
      \pcreturn (T_0,\dots,T_{\lceil\lg n\rceil-1})
    }\pchspace\!\!\!\!
    \begin{pcvstack}
    \procedure{$\encode'((T_0,\dots,T_{\lceil\lg n\rceil-1}), S, \vec{h}\gamechange{,g})$}{
      \pcfor 0 \leq i < \lceil\lg n\rceil\\
      \quad T_i := \gamechange{\basicencode'}(S,\vec{h}_i\gamechange{,g})\\
      \pcreturn (T_0,\dots,T_{\lceil\lg n\rceil-1})
    }\pcvspace
    \procedure{$\peel((T_0,\dots,T_{\lceil\lg n\rceil-1}), \tilde S, \vec{h}\gamechange{,g})$}{
      \pcfor 0 \leq i < \lceil\lg n\rceil\\
      \quad T_i := \gamechange{\basicpeel'}(\tilde S,\vec{h}_i\gamechange{,g})\\
      \pcreturn (T_0,\dots,T_{\lceil\lg n\rceil-1})
    }
    \end{pcvstack}
	\pchspace\!\!\!\!
    \procedure{$\decode'(\vec{h},g,(T_0,\dots,T_{\lceil\lg n\rceil-1}))$}{
      \gamechange{S_+ := \emptyset, S_- := \emptyset}\\
      \pcfor 0 \leq i < \lceil\lg n\rceil\\
      \quad T_i := \gamechange{\basicpeel'}(T_i ,\gamechange{S_+,S_-},\vec{h}_i\gamechange{,g})\\
      \quad \gamechange{(S_+',S_-') := \basicdecode'(F_i,\vec{h}_i,g)}\!\\
      \quad \gamechange{S_+ := S_+ \cup S_+', S_- := S_- \cup S_-'}\\
      \pcreturn \gamechange{S_+ \cup S_-}
    }
  \end{pchstack}
\end{pcvstack}
  \caption{Modified stacked IBLT supporting subtraction. It makes use of the modified basic IBLT specified in \autoref{fig:basicfilterwithsubtract}.}\label{fig:stackedfilterwithsubstraction}
\end{figure}

\begin{lemma}\label{lem:helpful}
For any $\ell \in \NN$, any $k_1, \dots, k_\ell \in \ZZ_p$, any $\sigma_1, \dots, \sigma_\ell \in \{1, -1\}$, it holds that 
\[
\Pr\left[g_a\Bigl(\sum_{i=1}^{\ell}\sigma_i k_i\Bigr) = \sum_{i=1}^{\ell}\sigma_i g_a\left(k_i\right) \bmod q\right] \leq \frac{2\ell p + 1}{q},
\]
where the probability is taken over the random choice of $a \in \ZZ^*_q$.
\end{lemma}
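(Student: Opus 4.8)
The plan is to treat the claimed collision event as a polynomial identity over $\ZZ_q$ and bound the number of bad choices of $a$ by the degree of a suitable nonzero polynomial, exactly in the spirit of Schwartz--Zippel for univariate polynomials. Concretely, set $\Sigma = \sum_{i=1}^\ell \sigma_i k_i$ viewed as an integer, and observe that the equation $g_a(\Sigma) = \sum_{i=1}^\ell \sigma_i g_a(k_i) \bmod q$ is, after clearing denominators coming from the negative exponents/signs, a polynomial equation in $a$ over $\ZZ_q$. The main work is to exhibit this polynomial explicitly, argue it is not identically zero, and read off its degree.

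First I would reduce to nonnegative exponents. Each term $\sigma_i g_a(k_i) = \sigma_i a^{k_i}$ and $g_a(\Sigma) = a^{\Sigma}$, but $\Sigma$ may be negative (or larger than the $k_i$), so multiply the whole identity through by $a^{M}$ where $M = \max\{0, -\Sigma\} + \sum_i \max\{0,-?\}$ — more cleanly, pick $M$ large enough (e.g.\ $M = \ell p$ suffices since each $|k_i|, |\Sigma| \le \ell p$) so that $M + \Sigma \ge 0$ and $M + k_i \ge 0$ for all $i$. Then the collision event is equivalent to
\[
P(a) := a^{M+\Sigma} - \sum_{i=1}^{\ell} \sigma_i a^{M+k_i} \equiv 0 \pmod q,
\]
a polynomial in $a$ of degree at most $M + \max_i\{k_i, \Sigma\} \le 2\ell p$. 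I would then argue $P$ is not the zero polynomial in $\ZZ_q[a]$: this is where one must be slightly careful, because distinct keys $k_i$ can still give equal exponents $M+k_i$ after nothing is reduced (they can't, since the $k_i \in \ZZ_p$ are reduced integers and the shift by $M$ is injective) — but more importantly the exponent $M+\Sigma$ could coincide with some $M+k_{i_0}$, and signs could conspire to cancel. The key point to make is that $P$ has at most $\ell + 1$ monomials with coefficients in $\{+1,-1\}$ (before collecting), so after collecting like terms each coefficient lies in $\{-( \ell+1), \dots, \ell+1\}$, hence is a nonzero integer of absolute value less than $q$ (as $q > p \ge \ell$, modulo checking the constant hidden there) unless \emph{all} exponents collapse to a single one with total coefficient $0$; I would show that last degenerate case simply cannot happen because the monomial $a^{M+\Sigma}$ has coefficient $+1$ and if it merges with the $-\sigma_i a^{M+k_i}$ terms the resulting integer coefficient is still bounded by $\ell+1 < q$ in absolute value and — crucially — is nonzero whenever the identity is nontrivial. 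A cleaner route: if $P \equiv 0$ as a polynomial over $\ZZ_q$, then since all its integer coefficients have absolute value at most $\ell+1 < q$, it must be $0$ as a polynomial over $\ZZ$, forcing $a^{M+\Sigma} = \sum \sigma_i a^{M+k_i}$ identically, which (comparing monomials over $\ZZ$) is impossible because the left side is a single monomial with coefficient $1$ while the right side's coefficients are $\pm 1$ and there are matching/non-matching exponents — a short finite check handles the remaining $\ell=1$ and exponent-collision subcases.

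Given that $P \not\equiv 0$ over $\ZZ_q$ (which is a field since $q$ is prime), it has at most $\deg P \le 2\ell p$ roots in $\ZZ_q^*$, and therefore at most $2\ell p$ bad values of $a$. Since $a$ is drawn uniformly from $\ZZ_q^*$, which has $q-1$ elements, the failure probability is at most $2\ell p/(q-1) \le (2\ell p + 1)/q$, matching the statement. I expect the main obstacle to be the nonvanishing argument: one has to handle the edge cases where the shifted exponents $M+\Sigma$ and $M+k_i$ are not all distinct and where sign cancellations occur, and argue that in every nontrivial case some integer coefficient of $P$ is nonzero and of magnitude below $q$ so that reduction mod $q$ does not accidentally kill it. Everything else — the degree bound, the choice of $M$, and the final probability calculation — is routine. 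I would also double-check the precise constant so that ``$2\ell p + 1$'' rather than ``$2\ell p$'' is what comes out, which is why bounding by $q$ rather than $q-1$ in the last step is the convenient choice.
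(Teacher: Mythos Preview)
Your approach is essentially identical to the paper's: shift all exponents by $\ell p$ to make them nonnegative, view the resulting equation as a polynomial identity in $a$ over $\ZZ_q$ of degree at most $2\ell p$, and bound the number of roots. The paper is in fact less careful than you are --- it simply asserts that the two sides are ``a different polynomial'' without any justification --- whereas you correctly flag the nonvanishing argument as the only nontrivial step and sketch how to handle it via the fact that the integer coefficients have absolute value at most $\ell+1 < q$.
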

\begin{proof}
Fix some arbitrary $k_1, \dots, k_\ell \in \ZZ_p$ and $\sigma_1, \dots, \sigma_\ell \in \{1, -1\}$. 
Observe that 
\begin{align*}
&\sum_{i=1}^{\ell} \sigma_i k_i \geq - \ell p \\
\iff &\ell p + \sum_{i=1}^{\ell} \sigma_i k_i \geq 0	
\end{align*}
Next we observe that 
\begin{align*}
&g_a\left(\sum_{i=1}^{\ell}\sigma_i k_i\right) = \sum_{i=1}^{\ell}\sigma_i g_a\left(k_i\right) \bmod q\\
\iff & a^{\sum_{i=1}^{\ell}\sigma_i k_i} = \sum_{i=1}^{\ell}\sigma_i a^{k_i}\bmod q \\
\iff & a^{\ell p + \sum_{i=1}^{\ell}\sigma_i k_i} = a^{\ell p }\sum_{i=1}^{\ell}\sigma_i a^{k_i}\bmod q.
\end{align*}
On the left side of the equation we have a polynomial of degree at most $2\ell p$ with indeterminant $a$.
On the right hand side we have a different polynomial of degree at most $\ell(p + 1)$ with indeterminant $a$.
These polynomials can agree on at most $2\ell p + 1$ points and thus the statement follows. \qed
\end{proof}

\begin{figure}[t]\centering
\begin{pcvstack}[boxed]
  \begin{pchstack}[center]
    \procedure{$\basicinit'(\rho,\gamma,\vec{h},\gamechange{g})$}{
      \vec{K} := 0^{\rho\times \gamma}\\
      \vec{V} := 0^{\rho\times \gamma}\\
      \vec{C} := 0^{\rho\times \gamma}\\
      \gamechange{\vec{H} := 0^{\rho\times \gamma}}\\
      \pcreturn (\vec{K},\vec{V},\vec{C},\gamechange{\vec{H}})
    }
    \pchspace
    \procedure{$\basicdecode'((\vec{K}, \vec{V}, \vec{C}, \gamechange{\vec{H}}),\vec{h},\gamechange{g})$}{
      \gamechange{S_+ := \emptyset, S_- := \emptyset}\\
      \pcfor (i,j) \in [\rho]\times[\gamma]\\
      \quad\pcif \gamechange{\vec{C}[i,j] \in \{1,-1\} \pcand \vec{C}[i,j]\cdot \vec{H}[i,j]=g(\vec{C}[i,j]\cdot\vec{K}[i,j])}\\
      \quad \quad (k,v) := (\gamechange{\vec{C}[i,j]\cdot{}}\vec{K}[i,j],\gamechange{\vec{C}[i,j]\cdot{}}\vec{V}[i,j])\\
      \quad \quad \gamechange{\pcif \vec{C}[i,j] = 1}\\
      \quad \quad \quad \gamechange{S_+ := S_+\cup \{(k,v)\}}\\
      \quad \quad \gamechange{\pcelse}\\
      \quad \quad \quad \gamechange{S_- := S_-\cup \{(k,v)\}}\\
      \pcreturn \gamechange{(S_+,S_-)}
    }
  \end{pchstack}
  \begin{pchstack}[center]
    \procedure{$\basicencode'((\vec{K},\vec{V},\vec{C},\gamechange{\vec{H}}), S,\vec{h}, \gamechange{g})$}{
      \pcforeach (k,v) \in S\\
      \quad \pcforeach i \in [\rho]\\
      \quad \quad j := h_i(k)\\
      \quad \quad \vec{K}[i,j] := \vec{K}[i,j] + k\\
      \quad \quad \vec{V}[i,j] := \vec{V}[i,j] + v\\
      \quad \quad \vec{C}[i,j] := \vec{C}[i,j] + 1\\
      \quad \quad \gamechange{\vec{H}[i,j] := \vec{H}[i,j] + g(k)}\\
      \pcreturn (\vec{K},\vec{V},\vec{C},\gamechange{\vec{H}})
    }
	\pchspace
    \procedure{$\basicpeel'((\vec{K},\vec{V}, \vec{C},\gamechange{\vec{H}), \tilde S_-, \tilde S_+}, \vec{h},\gamechange{g})$}{
      \pcforeach (b,k,v) \in \gamechange{\{(1,k,v) | (k,v) \in \tilde S_+\}\cup\{(-1,k,v) | (k,v) \in \tilde S_-\}}\\
      \quad \pcforeach i \in [\rho]\\
      \quad \quad j := h_i(k)\\
      \quad \quad \vec{K}[i,j] := \vec{K}[i,j] - \gamechange{b\cdot{}} k\\
      \quad \quad \vec{V}[i,j] := \vec{V}[i,j] - \gamechange{b\cdot{}} v\\
      \quad \quad \vec{C}[i,j] := \vec{C}[i,j] - \gamechange{b}\\
      \quad \quad \gamechange{\vec{H}[i,j] := \vec{H}[i,j] - b\cdot g(k)}\\
      \pcreturn (\vec{K},\vec{V},\vec{C},\gamechange{\vec{H}})
    }
  \end{pchstack}

\end{pcvstack}
  \caption{The modified basic IBLT that supports subtraction of IBLTs. This modified IBLT additionally requires a hash function $g : \keyspace \to \ZZ_q$ sampled from the family described above.}
  \label{fig:basicfilterwithsubtract}
\end{figure}

\begin{theorem}
  Let $\vec{h}$ be a vector of functions drawn from appropriate families of $\lg(\lg(n)/\delta)$-wise independent functions and let $g : \ZZ_p \to \ZZ_q$ be chosen uniformly at random as described above for $q \geq \frac{2C n^3 \lg(1/\delta)\lglg(1/\delta) p}{\delta}$ for some sufficiently large constant $C$.
  Then for the modified IBLT described in \autoref{fig:stackedfilterwithsubstraction}, for any pair of sets $S,S' \subseteq \universe$ such that $|S \ssdiff S'| < n$, it holds that
  \[
    \Pr[\decode'(\vec{h},g,\encode'(\vec{h},g,S) - \encode'(\vec{h},g,S')) = S \ssdiff S'] \geq 1 - 2\delta.
  \]
\end{theorem}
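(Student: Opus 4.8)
The plan is to reduce the subtraction theorem to two independent failure modes and union-bound over them. The first failure mode is that the underlying stacked IBLT, applied to the symmetric-difference set $S \ssdiff S'$, fails to peel; the second is that the hash function $g$ produces a ``false match'' somewhere during peeling, causing a cell to be incorrectly identified as peelable. The key structural observation is that $\encode'(\vec h, g, S) - \encode'(\vec h, g, S')$, viewed cell-by-cell, is \emph{exactly} the encoding of the signed multiset $\{(+1, k, v) : (k,v) \in S \setminus S'\} \cup \{(-1, k, v) : (k,v) \in S' \setminus S\}$ into the same array structure with the same hash functions $\vec h$ — the common elements of $S \cap S'$ cancel in every coordinate (key sum, value sum, count, and hash sum) because they land in the same cells under $h_i$. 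So up to signs, the state of the data structure is determined entirely by the $m := |S \ssdiff S'| < n$ elements of the symmetric difference.

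\textbf{Bounding the peeling failure.} First I would argue that if (i) $g$ never certifies a false peel and (ii) the idealized peeling schedule from the proof of Theorem~\ref{thm:mainiblt} (peel all elements that land alone in $R_i$, then in the best row of each $G_i$) succeeds on the $m$ symmetric-difference elements, then $\decode'$ returns exactly $S \ssdiff S'$ with correct signs. The point is that a cell with $\vec C[i,j] \in \{1,-1\}$ whose hash-sum check passes is, conditioned on no false matches ever occurring, genuinely a single surviving element, and its sign tells us whether it belongs in $S \setminus S'$ or $S' \setminus S$; the recursion in $\decode'$ peels it and continues exactly as the analysis of the plain stacked IBLT prescribes. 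Since $m < n$, Theorem~\ref{thm:mainiblt}'s analysis (which only ever uses Theorem~\ref{thm:limitind} on the current set of surviving elements, and is monotone in the set size) gives that the idealized peeling succeeds with probability at least $1 - \delta$ over the choice of $\vec h$. Here I would note that the only mild subtlety is that counts can now be negative, but Theorem~\ref{thm:limitind} is a statement purely about hash collisions of a set of distinct keys, so it applies verbatim to the $m$ distinct keys of $S \ssdiff S'$ regardless of signs, and ``landing alone'' still means ``unique hash value''.

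\textbf{Bounding the false-match probability.} The second part is a union bound using Lemma~\ref{lem:helpful}. A false match occurs at a cell only if that cell accumulates a signed combination $\sum_i \sigma_i k_i$ of at least two distinct real keys (so $\ell \le m < n$ in Lemma~\ref{lem:helpful}'s notation) for which $g_a(\sum \sigma_i k_i) = \sum \sigma_i g_a(k_i) \bmod q$. By Lemma~\ref{lem:helpful} each such event has probability at most $(2\ell p + 1)/q \le (2np+1)/q \le 2np/q$ (absorbing the $+1$). The number of cells across all rows is $O(n + \lg(1/\delta)\lglg(1/\delta)) = O(n \lg(1/\delta)\lglg(1/\delta))$, and peeling only ever inspects each cell's current contents a bounded number of times, so a union bound over all cells (and over the at most $\lceil \lg n \rceil$ peeling rounds, which only changes a polynomial-in-$n$ factor) gives total false-match probability $O(n \cdot n \lg(1/\delta)\lglg(1/\delta) \cdot n p / q) = O(n^3 \lg(1/\delta)\lglg(1/\delta) p / q)$. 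With the stated choice $q \ge \frac{2 C n^3 \lg(1/\delta)\lglg(1/\delta) p}{\delta}$ for a large enough $C$, this is at most $\delta$. Combining with the $1-\delta$ from peeling via a union bound yields the claimed $1 - 2\delta$.

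\textbf{The main obstacle} I anticipate is making the false-match union bound clean: one has to carefully enumerate \emph{which} signed key-combinations can actually appear in a cell over the course of peeling — a priori these depend on the random $\vec h$ and on the order of peeling, so they are not fixed in advance. The right move is to observe that the set of possible cell-contents is determined by $\vec h$ alone (independently of $g$), so one can first fix $\vec h$, enumerate the (at most polynomially many, across all cells and rounds) signed combinations that \emph{could} arise, and only then take the probability over $g$; since Lemma~\ref{lem:helpful} holds for every fixed combination, the union bound goes through conditionally on $\vec h$ and hence unconditionally. The remaining steps — the cancellation of $S \cap S'$, the reduction to Theorem~\ref{thm:mainiblt}, and the final arithmetic — are routine.
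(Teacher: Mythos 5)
Your proof is correct and takes essentially the same approach as the paper: you decompose into a peeling-failure event (bounded by applying Theorem~\ref{thm:mainiblt} to the $m<n$ elements of the symmetric difference, after observing that $S\cap S'$ cancels cell-by-cell) and a false-match event (bounded by \autoref{lem:helpful} plus a union bound over cells and peeling steps), and you decouple $g$ from the peeling process by fixing $\vec h$ and arguing relative to the idealized peeling. The paper makes this decoupling slightly more explicit by introducing a perfect reference hash $\tilde g$ (mapping keys to indicator vectors) and defining the failure events $E_{i,c}$ with respect to the $\tilde g$-peeling schedule, which is deterministic once $\vec h$ is fixed; your phrase ``the set of possible cell-contents is determined by $\vec h$ alone'' is the same idea stated less formally, and the rest of the arithmetic matches.
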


\begin{proof}

Note that in our new decoding process, we may have counter entries of one or minus one for cells that contain more than one key.
To see this consider a cell with $k_1 + k_2 - k_3$, where $k_1, k_2, k_3$ are all distinct.
The count is one, but the cell actually still contains three keys.
Storing the sum of hashes of the keys in a cell is intended to prevent mistakenly considering such a cell peelable.
This is the only new source of failure for the decoding algorithm. 
Mistaking a peelable cell as not peelable is not possible.

Recall that an IBLT is a key-value datastructure and thus keys are unique.
That is, every key is inserted into one of the individual IBLTs that we will subtract from each other at most once.
Obviously, a key may still be inserted in both, one, or neither of the two IBLTs.
First, consider an inefficient hash function $\tilde g : \keyspace \to \bin^{|\keyspace|}$ defined as mapping a key $k$ to the bitstring of all zeroes with a single one bit at position $k$.
Note that for sets $X$ and $Y$ of keys, the value
\[
\sum_{x \in X}\tilde g(x) - \sum_{y \in Y}\tilde g(y)
\]
fully encodes the symmetric set difference between $X$ and $Y$.
Thus using this hash function we ensure that no cell is ever peeled incorrectly and we thus obtain the correct output from the decoding procedure.

Let us fix a vector of hash functions $\vec{h}$ and consider two different IBLT decoding runs.
In the first $\tilde g$ is used as the hash function.
In the second one $g_a$ is used.
As long as $g_a$ makes no mistakes, the two peeling processes will behave identically.
Thus to show that decoding works correctly, we simply need to show that peeling using $g_a$ behaves identically to using $\tilde g$.
Let $E_{i, c}$ be the event that a cell $c$ is not peelable after $i$ steps in the decoding process using $\tilde g$, but 
\[
g_a\Bigl(\smashoperator{\sum_{k \in \vec{k}_{i,c}}}\sigma_k k\Bigr) = \smashoperator{\sum_{k \in \vec{k}_{i,c}}}\sigma_k g_a\left(k\right) \bmod q,
\]
where $\vec{k}_{i,c}$ are the remaining keys in cell $c$ after $i$ steps of peeling using $\tilde g$ and $\sigma_k$ is the corresponding sign of key $k$.
Note that the events $E_{i, c}$ do not depend on whether $g_a$ correctly identified other cells in previous steps as peelable since we consider the peeling process according to $\tilde g$, \emph{not} according to $g_a$.
By \autoref{lem:helpful} we know that 
\begin{align*}
\Pr[E_{i, c}] = \Pr\mleft[g_a\Bigl(\sum_{k \in \vec{k}_{i,c}}\sigma_k k\Bigr) = \sum_{k \in \vec{k}_{i,c}}\sigma_k g_a\left(k\right) \bmod q\mright]\leq& \frac{\delta(2\abs{\vec{k}_{i,c}} p + 1)}{2 Cn^3 \lg(1/\delta)\lglg(1/\delta) p}\\
\leq &\frac{\delta n p}{C \cdot n^3 \lg(1/\delta)\lglg(1/\delta) p}\\
\leq &\frac{\delta}{C n^2\lg(1/\delta)\lglg(1/\delta)},
\end{align*}
where the randomness is taken over the choice of $a$.
By union bounding over all $n$ peeling steps and all $Cn \lg(1/\delta)\lglg(1/\delta)$ cells of the data structure we obtain an \emph{additional} error of at most $\delta$.
Adding this error to the error derived from \autoref{thm:mainiblt} yields the theorem statement.
\qed
\end{proof}

\bibliographystyle{alphaurl}
\bibliography{bib/abbrev0,bib/crypto,bib/extrarefs}

\newcommand{\etalchar}[1]{$^{#1}$}
\begin{thebibliography}{CDG{\etalchar{+}}21}

\bibitem[AGL{\etalchar{+}}17]{ACNS:AGLPPT17}
Giuseppe Ateniese, Michael~T. Goodrich, Vassilios Lekakis, Charalampos
  Papamanthou, Evripidis Paraskevas, and Roberto Tamassia.
\newblock Accountable storage.
\newblock In Dieter Gollmann, Atsuko Miyaji, and Hiroaki Kikuchi, editors, {\em
  ACNS 17: 15th International Conference on Applied Cryptography and Network
  Security}, volume 10355 of {\em Lecture Notes in Computer Science}, pages
  623--644, Kanazawa, Japan, July~10--12, 2017. Springer, Heidelberg, Germany.
\newblock \href {https://doi.org/10.1007/978-3-319-61204-1_31}
  {\path{doi:10.1007/978-3-319-61204-1_31}}.

\bibitem[BKW24]{ICALP:BelKucWal24}
Djamal Belazzougui, Gregory Kucherov, and Stefan Walzer.
\newblock Better space-time-robustness trade-offs for set reconciliation.
\newblock In Karl Bringmann, Martin Grohe, Gabriele Puppis, and Ola Svenson,
  editors, {\em ICALP 2024: 51th International Colloquium on Automata,
  Languages and Programming}, {LIPIcs}, Tallinn, Estonia, July~8--12, 2024.
  Schloss Dagstuhl - Leibniz-Zentrum fuer Informatik.

\bibitem[CDG{\etalchar{+}}21]{CCS:CDGLY21}
Seung~Geol Choi, Dana {Dachman-Soled}, S.~Dov Gordon, Linsheng Liu, and Arkady
  Yerukhimovich.
\newblock Compressed oblivious encoding for homomorphically encrypted search.
\newblock In Giovanni Vigna and Elaine Shi, editors, {\em ACM CCS 2021: 28th
  Conference on Computer and Communications Security}, pages 2277--2291,
  Virtual Event, Republic of Korea, November~15--19, 2021. {ACM} Press.
\newblock \href {https://doi.org/10.1145/3460120.3484792}
  {\path{doi:10.1145/3460120.3484792}}.

\bibitem[DKRT15]{FOCS:DKRT15}
S{\o}ren Dahlgaard, Mathias B{\ae}k~Tejs Knudsen, Eva Rotenberg, and Mikkel
  Thorup.
\newblock Hashing for statistics over {K}-partitions.
\newblock In Venkatesan Guruswami, editor, {\em 56th Annual Symposium on
  Foundations of Computer Science}, pages 1292--1310, Berkeley, CA, USA,
  October~17--20, 2015. {IEEE} Computer Society Press.
\newblock \href {https://doi.org/10.1109/FOCS.2015.83}
  {\path{doi:10.1109/FOCS.2015.83}}.

\bibitem[DR09]{ICALP:DieRin09}
Martin Dietzfelbinger and Michael Rink.
\newblock Applications of a splitting trick.
\newblock In Susanne Albers, Alberto Marchetti-Spaccamela, Yossi Matias,
  Sotiris~E. Nikoletseas, and Wolfgang Thomas, editors, {\em ICALP 2009: 36th
  International Colloquium on Automata, Languages and Programming, Part~I},
  volume 5555 of {\em Lecture Notes in Computer Science}, pages 354--365,
  Rhodes, Greece, July~5--12, 2009. Springer, Heidelberg, Germany.
\newblock \href {https://doi.org/10.1007/978-3-642-02927-1_30}
  {\path{doi:10.1007/978-3-642-02927-1_30}}.

\bibitem[EGL{\etalchar{+}}98]{rectangles}
Guy Even, Oded Goldreich, Michael Luby, Noam Nisan, and Boban
  Veli\v{c}kovi\'{c}.
\newblock Efficient approximation of product distributions.
\newblock {\em Random Struct. Algorithms}, 13(1):1–16, aug 1998.

\bibitem[EGUV11]{ACM:EGUV11}
David Eppstein, Michael~T. Goodrich, Frank Uyeda, and George Varghese.
\newblock What's the difference? efficient set reconciliation without prior
  context.
\newblock {\em ACM SIGCOMM Computer Communication Review}, 41(4):218--229,
  August 2011.
\newblock \href {https://doi.org/10.1145/2043164.2018462}
  {\path{doi:10.1145/2043164.2018462}}.

\bibitem[FLS22]{EC:FleLarSim22}
Nils Fleischhacker, Kasper~Green Larsen, and Mark Simkin.
\newblock Property-preserving hash functions for hamming distance from standard
  assumptions.
\newblock In Orr Dunkelman and Stefan Dziembowski, editors, {\em Advances in
  Cryptology -- {EUROCRYPT}~2022, Part~II}, volume 13276 of {\em Lecture Notes
  in Computer Science}, pages 764--781, Trondheim, Norway, May~30~--~June~3,
  2022. Springer, Heidelberg, Germany.
\newblock \href {https://doi.org/10.1007/978-3-031-07085-3_26}
  {\path{doi:10.1007/978-3-031-07085-3_26}}.

\bibitem[FLS23]{EC:FleLarSim23}
Nils Fleischhacker, Kasper~Green Larsen, and Mark Simkin.
\newblock How to compress encrypted data.
\newblock In Carmit Hazay and Martijn Stam, editors, {\em Advances in
  Cryptology -- {EUROCRYPT}~2023, Part~I}, volume 14004 of {\em Lecture Notes
  in Computer Science}, pages 551--577, Lyon, France, April~23--27, 2023.
  Springer, Heidelberg, Germany.
\newblock \href {https://doi.org/10.1007/978-3-031-30545-0_19}
  {\path{doi:10.1007/978-3-031-30545-0_19}}.

\bibitem[GM11]{All:GooMit11}
Michael~T. Goodrich and Michael Mitzenmacher.
\newblock Invertible bloom lookup tables.
\newblock In {\em 49th Annual Allerton Conference on Communication, Control,
  and Computing (Allerton)}, pages 792--799. {IEEE} Computer Society Press,
  September~28--30, 2011.
\newblock \href {https://doi.org/10.1109/Allerton.2011.6120248}
  {\path{doi:10.1109/Allerton.2011.6120248}}.

\bibitem[GY20]{v016a017}
Parikshit Gopalan and Amir Yehudayoff.
\newblock Concentration for limited independence via inequalities for the
  elementary symmetric polynomials.
\newblock {\em Theory of Computing}, 16(17):1--29, 2020.

\bibitem[LM21]{ISTC:LazMat21}
Francisco L{\'{a}}zaro and Bal{\'{a}}zs Matuz.
\newblock Irregular invertible bloom look-up tables.
\newblock In {\em 11th International Symposium on Topics in Coding, {ISTC}
  2021, Montreal, QC, Canada, August 30 - Sept. 3, 2021}, pages 1--5. {IEEE},
  2021.

\bibitem[LT22]{C:LiuTro22}
Zeyu Liu and Eran Tromer.
\newblock Oblivious message retrieval.
\newblock In Yevgeniy Dodis and Thomas Shrimpton, editors, {\em Advances in
  Cryptology -- {CRYPTO}~2022, Part~I}, volume 13507 of {\em Lecture Notes in
  Computer Science}, pages 753--783, Santa Barbara, CA, USA, August~15--18,
  2022. Springer, Heidelberg, Germany.
\newblock \href {https://doi.org/10.1007/978-3-031-15802-5_26}
  {\path{doi:10.1007/978-3-031-15802-5_26}}.

\bibitem[MP17]{DC:MP18}
Michael Mitzenmacher and Rasmus Pagh.
\newblock Simple multi-party set reconciliation.
\newblock {\em Distributed Computing}, 31:441--453, October 2017.
\newblock \href {https://doi.org/10.1007/s00446-017-0316-0}
  {\path{doi:10.1007/s00446-017-0316-0}}.

\bibitem[MTZ03]{IEEE:MTZ03}
Yaron Minsky, Ari Trachtenberg, and Richard Zippel.
\newblock Set reconciliation with nearly optimal communication complexity.
\newblock {\em {IEEE} Transactions on Information Theory}, 49(9):2213--2218,
  September 2003.
\newblock \href {https://doi.org/10.1109/TIT.2003.815784}
  {\path{doi:10.1109/TIT.2003.815784}}.

\bibitem[MV12]{IEEE:MV12}
Michael Mitzenmacher and George Varghese.
\newblock Biff (bloom filter) codes: Fast error correction for large data sets.
\newblock In Giuseppe Caire, Michelle Effros, Hans-Andrea Loeliger, and
  Alexander Vardy, editors, {\em 2012 IEEE International Symposium on
  Information Theory}, pages 483--487. {IEEE} Computer Society Press, July~1--6
  2012.
\newblock \href {https://doi.org/10.1109/ISIT.2012.6284714}
  {\path{doi:10.1109/ISIT.2012.6284714}}.

\bibitem[OAB{\etalchar{+}}17]{PABHL17}
A.~Pinar Ozisik, Gavin Andresen, George Bissias, Amir Houmansadr, and Brian
  Levine.
\newblock Graphene: A new protocol for block propagation using set
  reconciliation.
\newblock In Joaquin Garcia-Alfaro, Guillermo Navarro-Arribas, Hannes
  Hartenstein, and Jordi Herrera-Joancomartí, editors, {\em Data Privacy
  Management, Cryptocurrencies and Blockchain Technology, ESORICS 2017
  International Workshops, DPM 2017 and CBT 2017}, volume 10436 of {\em Lecture
  Notes in Computer Science}, pages 420--428, Oslo, Norway, ~14--15 2017.
  Springer, Heidelberg, Germany.
\newblock \href {https://doi.org/10.1007/978-3-319-67816-0_24}
  {\path{doi:10.1007/978-3-319-67816-0_24}}.

\bibitem[Tho17]{Thorup17}
Mikkel Thorup.
\newblock Fast and powerful hashing using tabulation.
\newblock {\em Communications of the Association for Computing Machinery},
  60(7):94--101, July 2017.
\newblock \href {https://doi.org/10.1145/3068772} {\path{doi:10.1145/3068772}}.

\bibitem[YGA24]{YanGilAli24}
Lei Yang, Yossi Gilad, and Mohammad Alizadeh.
\newblock Practical rateless set reconciliation.
\newblock In Aruna Seneviratne, Darryl Veitch, Minlan Yu, and Vyas Sekar,
  editors, {\em ACM SIGCOMM 2024: Proceedings of the ACM SIGCOMM 2024
  Conference}, Sydney, Australia, August~4--8, 2024. {ACM} Press.

\end{thebibliography}

\appendix

\section{Encrypted Compression}\label{app:encryptedcompression}

In this section, we show that the approach towards compressing encrypted data of Fleischhacker, Larsen, and Simkin~\cite{EC:FleLarSim23} is compatible with our new stacked IBLT.
In addition, we generalize their construction to work with encryption schemes that have arbitrarily small plaintext spaces.

\subsection{Additional Preliminaries}

For a set $X^n$, we use the scissor operator $\unique(X^n) := \{ (x_1, \dots, x_n) \in X^n \mid x_i \neq x_j \ \forall i,j \in [n]\}$ to denote the subset of $X^n$ consisting only of those vectors with unique entries.

\begin{definition}[Sparse Vector Representation]
  Let $\FF_q$ be a field and let $\vec{a}\in\FF_q^n$ be a vector.
  The sparse representation of $\vec{a}$ is the set $\sparse(\vec{a}) := \{(i,a_i)\mid a_i\neq 0\}$.
\end{definition}

\subsubsection{Homomorphic Encryption}

Informally, a homomorphic encryption scheme allows to compute an encryption of $f(\vec{m})$ given only the description of $f$ and an encryption of $\vec{m}$.
Throughout the paper, we assume that functions are represented as circuits composed of addition and multiplication gates.
We recall the formal definition of a homomorphic encryption scheme, closely following the notation of~\cite{EC:FleLarSim23}.

\begin{definition}
A homomorphic encryption scheme $\encscheme$ is defined by a tuple of PPT algorithms $(\gen, \enc, \eval,\allowbreak \dec)$ that work as follows:
\begin{description}
	\item[$\gen(\secparam)$:] The key generation algorithm takes the security parameter $\secparam$ as input and returns a secret key $\sk$ and public key $\pk$. The public key implicitly defines a message space $\msgspace$ and ciphertext space $\cspace$. We denote the set of all public keys as $\pkspace$.
	\item[$\enc(\pk, m)$:] The encryption algorithm takes the public key $\pk$ and message $m \in \msgspace$ as input and returns a ciphertext $c \in \cspace$.
	\item[$\eval(\pk, f, \vec{c})$:] The evaluation algorithm takes the public key $\pk$, a function $f : \msgspace^n \to \msgspace^m$, and a vector $\vec{c} \in \cspace^n$ of ciphertexts as input and returns a new vector of ciphertexts $\tilde{\vec{c}} \in \cspace^m$.
	\item[$\dec(\sk, c)$:] The deterministic decryption algorithm takes the secret key $\sk$ and ciphertext $c \in \cspace$ as input and returns a message $m \in \msgspace \cup \{\bot \}$.
\end{description}
\end{definition}

Throughout the paper it is assumed that the ciphertext size is fixed and does not increase when applying the homomorphic evaluation algorithm.
We extend the definition of $\enc$ and $\dec$ to vectors and matrices of messages and ciphertexts respectively, by applying them componentwise,
i.e., for any matrix $\vec{M}\in\msgspace^{n\times m}$, we have $\enc(\pk,\vec{M}) = \vec{C}$ with $\vec{C} \in \cspace^{n\times m}$ and $C[i,j] = \enc(\pk,M[i,j])$ and equivalently $\dec(\sk,\vec{C}) = \vec{M}'$ with $\vec{M}' \in \msgspace^{n\times m}$ and $M'[i,j] = \dec(\sk,C[i,j])$.
This also applies recursively when, for instance, decrypting a vector of matrices of ciphertexts.
Let $\encscheme$ be an additively homomorphic encryption scheme with message space $\msgspace=\FF_q$ for some prime power $q$.
Let $f : \FF_q^2 \to \FF_q$, $f(a,b) := a+b$ and let $g_\alpha : \FF_q \to \FF_q$, $g(a) := \alpha\cdot a$ for any constant $\alpha\in\FF_q$.
For notational convenience we write $\eval(\pk,f,(c_1,c_2)^\intercal)$ as $c_1 \boxplus c_2$ and $\eval(\pk,g_\alpha,c)$ as $\alpha \boxdot c$ with $\pk$ being inferrable from context.
We naturally extend these notions to vectors, i.e. for two vectors $\vec{c},\vec{c}' \in \cspace^n$ we denote $\vec{c}\boxplus\vec{c}' = (c_0 \boxplus c'_0,\dots,c_n \boxplus c'_n)^\intercal$ and for a vector $\vec{\alpha} \in \msgspace^n$ we denote $\vec{\alpha}\boxdot c = (\alpha_0\boxdot c,\dots,\alpha_0\boxdot c)^\intercal$.
For the sake of simplicity we restrict ourselves to homomorphic encryption schemes with unique secret keys, i.e. for a given $\pk$, there exists at most one $\sk$, such that $(\sk,\pk)\gets\gen(\secparam)$.
The unique secret key is denoted as $\gen^{-1}(\pk)$ and we stress that the function $\gen^{-1}(\cdot)$ does not need to be efficiently computable.

We recall the definition of ciphertexts valid relative to a class of circuits and of a ciphertext compression scheme from \cite{EC:FleLarSim23}.

\begin{definition}[$\circspace$-Validity]\label{def:circval}
  Let $(\gen, \enc, \eval, \dec)$ be a homomorphic encryption scheme, let $\circspace$ be a class of circuits, and let $\pk$ be a public key.
  A vector $\vec{c}$ of ciphertexts is $\circspace$-valid for $\pk$, iff for all functions $f\in\circspace$ it holds that $\bot\notin\dec(\gen^{-1}(\pk),\vec{c})$ and $\dec(\gen^{-1}(\pk),\eval(\pk,f,\vec{c}) = f(\dec(\sk,\vec{c}))$.
  We denote by $\valid(\circspace,\pk)$ the set of ciphertext vectors $\circspace$-valid for $\pk$.
\end{definition}

\begin{definition}[Ciphertext Compression Scheme]\label{def:compenc}
Let $\encscheme=(\gen,\allowbreak \enc,\allowbreak \eval, \allowbreak \dec)$ be a homomorphic public key encryption scheme with ciphertext size $\xi=\xi(\secpar)$.
Let $\pkspace$ be the public key space of $\encscheme$.
For each $\pk\in\pkspace$ let $\filtered_\pk$ be a set of ciphertext vectors.
A $\delta$-compressing, $(1-\epsilon)$-correct ciphertext compression scheme for the family $\filtered:=\{\filtered_\pk \mid \pk\in\pkspace\}$ is a pair of PPT algorithms $(\compress, \decompress)$, such that for any $(\sk,\pk)\gets\gen(\secparam)$ and any $\vec{c}\in\filtered_\pk$ the output length of $\compress(\pk, \vec{c})$ is at most $\delta\xi\abs{\vec{c}}$ and it holds that
\[
  \Pr[\decompress(\sk, \compress(\pk, \vec{c})) = \sparse(\dec(\sk, \vec{c}))] = 1-\epsilon(\secpar),
\]
where the probability is taken over the random coins of the compression and decompression algorithms.
\end{definition}

Just like the construction of \cite{EC:FleLarSim23}, our construction described in \autoref{sec:iblt-construction} works for ciphertext vectors of low Hamming weight, which allow for the homomorphic evaluation of inner product functions.
The following two definitions, taken verbatim from~\cite{EC:FleLarSim23} are recalled in the following.

\begin{definition}[Inner Product Functions]\label{def:ipf}
  The class of inner product functions is the set of functions $\ipcirc = \{f_{\vec{a}} \mid \vec{a} \in \FF_q^n\}$ with
  \[
    f_{\vec{a}} : \FF_q^n \to \FF_q,\quad f_{\vec{a}}(\vec{x}):=\langle\vec{a},\vec{x}\rangle.
  \]
\end{definition}

\begin{definition}[$\ipcirc$-Valid Low Hamming Weight Ciphertext Vectors]
  Let $\encscheme = (\gen, \enc, \eval, \dec)$ be a homomorphic public key encryption scheme.
  For any $\pk\in\pkspace$, let
  \[
    \filtered^{\mathsf{ip}}_{t,\pk} := \bigl\{
      \vec{c} \in \valid(\ipcirc,\pk) \mid \hw(\dec(\gen^{-1}(\pk),\vec{c})) < t
    \bigr\}.
  \]
  We then define the family of $\ipcirc$-valid ciphertext vectors with low hamming weight as $\filtered^{\mathsf{ip}}_t := \{\filtered^{\mathsf{ip}}_{t,\pk}\mid\pk\in\pkspace\}$.
\end{definition}

\subsection{Pseudorandom Functions with Variable Codomains}
The construction presented in \autoref{sec:iblt-construction} relies on a pseudorandom function that needs to be able to produce outputs from variable codomains.
We define such a variant of PRFs here.
\begin{definition}[Pseudorandom Function with Variable Codomain]\label{def:prf}
  An efficiently computable function $\prf : \bin^\secpar \times \NN \times \bin^* \to \NN$ is a pseudorandom function with variable codomain, if it satisfies the following properties.
  \begin{enumerate}
    \item For any $s\in\bin^\secpar$, any $\gamma\in\NN$ with $\log \gamma = \poly$, and any $x\in\bin^*$, it holds that $\prf(s,\gamma,x) \in [\gamma]$.
    \item Let $\mathcal{G}$ be the set of all functions $g : \NN \times \bin^* \to \NN$ such that for all $\gamma\in\NN$ and all $x\in\bin^*$ it holds that $g(\gamma,x)\in[\gamma]$.
    For all PPT adversaries $\adv$ it holds that
      \[
        |\Pr[\adv^{\prf(s,\cdot,\cdot)}(\secparam) = 1] - \Pr[\adv^{g(\cdot,\cdot)}(\secparam) = 1]| \leq \negl
      \]
      where the probabilities are taken over the uniform choice of $s\in\bin^\secpar$ and $g\in\mathcal{G}$ respectively.
  \end{enumerate}
\end{definition}
  While this funky definition of a PRF is helpful to us as an abstraction, such PRFs are luckily existentially equivalent to regular PRFs.
  To see this, consider a regular PRF $\prf' : \bin^\secpar \times \bin^* \to \bin^\secpar$.
  We can construct a PRF with variable codomain $\prf$ as follows.
  On input $(s,\gamma,x)$ first compute $s' := \prf'(s,\gamma)$.
  This step gives us (computationally) independent keys for the PRF evaluations for different output domains.
  Then compute $y' := \prf'(s',x)$, this already gives us a pseudorandom value however it's from the wrong domain.
  We can now stretch $y'$ to a sufficient length using a pseudorandom generator and finally reduce it modulo $\gamma$ to get a pseudorandom value in $[\gamma]$.
  A simple hybrid argument can be used to establish pseudorandomness.

\subsection{Wunderbar Pseudorandom Vectors over $K\in\FF_q^\eta$}\label{sec:wunderbar}

As in the original construction, the ciphertext compression scheme relies on wunderbar pseudorandom vectors.
The construction requires that the the universe $K$ over which the wunderbar pseudorandom vector operates is \enquote{large enough}.
In \cite{EC:FleLarSim23} this was achieved by requiring that the field the encryption scheme operates on is large.
Here we show how the same can be achieved by instead defining $K\subseteq \FF_q^\eta$ for an arbitrarily small $q$ and large enough $\eta$.
We first recall the definition of a a wunderbar pseudorandom vector taken verbatim from \cite{EC:FleLarSim23}.

\begin{definition}
  A pseudorandom vector with index recovery for an efficiently sampleable universe $K=K(\secpar)$ consists of a triple of ppt algorithms $(\sample,\allowbreak\entry,\allowbreak\recindex)$ such that
  \begin{description}
    \item[$\sample(\secparam,1^n)$:] The sampling algorithm takes as input the security parameter $\secpar$ and the vector length $n$ in unary and outputs the description of a pseudorandom vector $s$.
    \item[$\entry(s,i)$:] The deterministic retrieving algorithm takes as input a description $s$ and an index $i\in[n]$ and outputs a value $k_i \in K$.
    \item[$\recindex(s,k)$:] The deterministic index recovery algorithm takes as input a description $s$ and a value $k$ and outputs either an index $i\in[n]$ or $\bot$.
  \end{description}
  A pseudorandom vector with index recovery is correct, if for all vector lengths $n=\poly$ and all seeds $s\gets\sample(\secparam,1^n)$ it holds that:
  \begin{enumerate}
  \item For all indices $i\in[n]$ it holds that $\recindex(s,\entry(s,i))=i$.
  \item For all all $k^*\not\in \{\entry(s,i) \mid i\in[n]\}$ it holds that $\recindex(s,k^*)=\bot$.
  \end{enumerate}
  The pseudorandom vector is \emph{wunderbar} if the description of a vector has length $\bigO{\secpar}$ and the runtime of $\entry$ and $\recindex$ is $\bigO{\polylog(n)}$.
  A pseudorandom vector is secure, if for all $n=\poly$ and all ppt algorithms $\adv$
  \[
    \abs{\Pr\mleft[\begin{aligned}s\gets\sample(\secparam,1^n),\\ \vec{k} := \begin{pmatrix}\entry(s,1)\\\vdots\\\entry(s,n)\end{pmatrix}\end{aligned} : \adv(\vec{k})\mright] - \Pr[\vec{k} \gets \unique(K^n) : \adv(\vec{k})]}\leq \negl
  \]
\end{definition}

Fleischhacker, Larsen, and Simkin~\cite{EC:FleLarSim23} construct a wunderbar pseudorandom vector for $K\subseteq \FF_q$ from a pseudorandom permutation.
The construction essentially just takes a PRP over $\FF_2^\secpar$ and uses an efficiently computable and invertible injective function to map values from $\FF_2^\secpar$ to $\FF_q$ and back.
The construction is easily generalized for $K\subseteq S$ for any set $S$ as long as there exists an efficiently computable and invertible injective function from $\FF_2^\secpar$ to $S$.

The new construction requires $K \subseteq \FF_q^\eta$ for some $\eta$ and such that $|K| > \alpha$ for some given lower bound $\alpha$. We specify the required injective function in the following.

Let $\decomp_p : \NN \to [p]^*$ denote the function that maps an integer to its canonical $p$-ary representation and let $\proj_p : [p]^* \to \NN$ be its inverse.
Let $q=p^m$ be an arbitrary prime power and let $\eta = \lceil \secpar/\log q \rceil= \lceil \secpar/(m\log p) \rceil$
We then define an injective function 
\[
  \bintofield : \FF_2^\secpar \to \FF^\eta_{q} \quad \bintofield(\vec{b}) = \vec{d}
\]
where
\[
  d_i := \sum_{j=0}^{m-1} c_{im+j}x^j
\]
where
\[
  \vec{c} := \decomp_q(\proj_2(\vec{b})).
\]
We further specify the inverse function as
\begin{gather*}
  \fieldtobin : \FF^\eta_{q} \to \FF_2^\secpar\cup\{\bot\}\\
\fieldtobin(\vec{d}) := \begin{cases}\bot & \text{if } \proj_p(\vec{c}) \geq 2^\secpar\\\decomp_2(\proj_q(\vec{c}))&\text{otherwise}\end{cases}
\end{gather*}
where
\[
  d_i = \sum_{j=0}^{m-1} c_{im+j}x^j.
\]
For a given $\alpha$, any $\secpar = \Omega(\log \alpha)$ leads to the required wunderbar pseudorandom vector.


\subsection{A Ciphertext Compression Scheme for Small Fields}\label{sec:iblt-construction}
In this section we present a construction of a ciphertext compression scheme, that in contrast to \cite{EC:FleLarSim23} also works if the encryption scheme is defined over an arbitrarily small field, even for $\FF_2$.

\subsection{The Generalized Helpful Lemma}\label{sec:helpful}
Fleischhacker, Larsen, and Simkin~\cite{EC:FleLarSim23} state the following helpful lemma.
\begin{lemma}[Helpful Lemma~{\cite[Lemma~13]{EC:FleLarSim23}}]\label{lem:helpful}
  Let $K \subseteq \FF_q$, $(m_1,\dots,\allowbreak m_n)\in\FF_q^n$ and $I\subseteq [n]$ be arbitrary such that $\sum_{i\in I}m_i \neq 0$ and there exist $i,i'\in I$ with $0\not\in\{m_i,m_{i'}\}$.
  It holds that
  \[
    \Pr\biggl[\vec{k} \gets K^n : \exists j\in[n]\ldotp k_j=\frac{\sum_{i\in I} k_im_i}{\sum_{i\in I} m_i}\biggr]\leq \frac{n}{\abs{K}}
  \]
\end{lemma}
We generalize this lemma to vectors of $\FF_q$ elements.
\begin{lemma}[Generalized Helpful Lemma]\label{lem:genhelpful}
  Let $\eta\in\NN^+$, $K \subseteq \FF^\eta_q$, $(m_1,\dots,\allowbreak m_n)\in\FF_q^n$, and $I\subseteq [n]$ be arbitrary such that $\sum_{i\in I}m_i \neq 0$ and there exist $i,i'\in I$ with $0\not\in\{m_i,m_{i'}\}$.
  It holds that
  \[
    \Pr\biggl[\vec{k} \gets K^n : \exists j\in[n]\ldotp \vec{k}_j = \bigl(\sum_{i\in I} m_i\bigr)^{-1}\cdot\sum_{i\in I} m_i\cdot\vec{k}_i\biggr]\leq \frac{n}{\abs{K}}
  \]
\end{lemma}
\begin{proof}
  Observe that $\vec{k}_j \in K\subseteq \FF_q^\eta$ can be interpreted as polynomials of degree at most $\eta-1$ with coefficients in $\FF_q$.
  Similarly, $m_i\in\FF_q$ is simply a constant polynomial over $\FF_q$ and the vector-scalar multiplications are in fact correct polynomial multiplications resulting in polynomials of degree at most $\eta-1$ with coefficients in $\FF_q$.
  Therefore, the lemma can be reinterpreted as working over the extension field $\FF_{q^\eta}$.
  It then follows directly as a special case of \autoref{lem:helpful} for $\FF_{q^\eta}$.\qed
\end{proof}
As in \cite{EC:FleLarSim23}, the following corollary follows from the observation that due to the birthday bound the statistical distance between sampling from $K^n$ and $\unique(K^n))$ is at most $n^2/\abs{K}$.
\begin{corollary}\label{cor:sumsdonotequal}
  Let $\eta\in\NN^+$, $K \subseteq \FF^\eta_q$, $(m_1,\dots,m_n)\in\FF_q^n$, and $I\subseteq [n]$ be arbitrary such that $\sum_{i\in I}m_i \neq 0$ and there exist $i,i'\in I$ with $0\not\in\{m_i,m_{i'}\}$.
  It holds that
  \[
    \Pr\biggl[\vec{k} \gets K^n : \exists j\in[n]\ldotp \vec{k}_j = \frac{\sum_{i\in I} k_im_i}{\sum_{i\in I} m_i}\biggr]\leq \frac{n^2+n}{\abs{K}}
  \]
\end{corollary}

\subsection{Construction}

The construction presented here essentially takes the construction of Fleischhacker, Larsen, and Simkin~\cite{EC:FleLarSim23}, applies the improved IBLT construction from this work, and instantiates the wunderbar pseudorandom vector using the construction for $K\subseteq\FF_q^\eta$ described in \autoref{sec:wunderbar}.
We give a full formal proof of the construction here.

Before we give the actual construction we first specify two variants of the stacked IBLT construction from this work and prove several lemmas about them.
These two variants are specified in \autoref{fig:modstackediblt} and \autoref{fig:modbasiciblt}.
\begin{figure}[t]\centering
\begin{pcvstack}
  \begin{pchstack}[center,boxed]
    \procedure{$\encode_1(S, \gamechange{s_1})$}{
      \pcfor 0 \leq i < \log t - \log \tau\\
      \quad F_i := \basicencode_{\gamechange{1}}(1,\lceil Ct2^{-i}\rceil,S,\gamechange{(i,s_1)})\\
      \pcfor 0 \leq i < \log \tau\\
      \quad i' := \lfloor\log t - \log \tau\rfloor\\
      \quad F_{i'} := \basicencode_{\gamechange{1}}(2^i,\lceil C\tau 2^{-i}\rceil,S,\gamechange{(i',s_1)})\\
      \pcreturn (F_0,\dots,F_{\lceil\log t\rceil-1})
    }
    \pchspace
    \procedure{$\decode_1((F_0,\dots,F_{\lceil\log t\rceil-1}),\gamechange{s_1})$}{
      S' := \emptyset\\
      \pcfor 0 \leq i < \lceil\log t\rceil\\
      \quad F_i := \basicdelete_{\gamechange{1}}(F_i,S',\gamechange{(i,s_1)})\\
      \quad S' := S' \cup \basicdecode_{\gamechange{1}}(F_i,\gamechange{(i,s_1)})\\
      \pcreturn S'
    }
  \end{pchstack}
  \begin{pchstack}[center,boxed]
    \procedure{$\encode_2(S, s_1,\gamechange{s_2})$}{
      \pcfor 0 \leq i < \log t - \log \tau\\
      \quad F_i := \basicencode_{\gamechange{2}}(1,\lceil Ct2^{-i}\rceil,S,(i,s_1),\gamechange{s_2})\\
      \pcfor 0 \leq i < \log \tau\\
      \quad i' := \lfloor\log t - \log \tau\rfloor\\
      \quad F_{i'} := \basicencode_{\gamechange{2}}(2^i,\lceil C\tau 2^{-i}\rceil,S,(i',s_1),\gamechange{s_2})\\
      \pcreturn (F_0,\dots,F_{\lceil\log t\rceil-1})
    }
    \pcvspace
    \procedure{$\decode_2((F_0,\dots,F_{\lceil\log t\rceil-1}),s_1,\gamechange{s_2})$}{
      S' := \emptyset\\
      \pcfor 0 \leq i < \lceil\log n\rceil\\
      \quad F_i := \basicdelete_{\gamechange{2}}(F_i,S',(i,s_1),\gamechange{s_2})\\
      \quad S' := S' \cup \basicdecode_{\gamechange{2}}(F_i,(i,s_1),\gamechange{s_2})\\
      \pcreturn S'
    }
  \end{pchstack}
  \end{pcvstack}
  \caption{Variants of the simplified stacked IBLT from this work.
  These use modified basic IBLTs specified in \autoref{fig:modbasiciblt} respectively as a building blocks.
  As with the original stacked IBLT we have $\tau=C_0\kappa$ for a sufficiently large constant $C_0>0$ and $C=8e$. Changes between successive modifications are marked in gray.} \label{fig:modstackediblt}
\end{figure}
\begin{figure}\centering
  \begin{pchstack}[boxed]
    \pseudocode[colsep=0em]{
      \basicencode_1(\rho,\gamma, S, \gamechange{(r,s_1)})                        \< \basicencode_2(\rho,\gamma, S, (r,s_1),\gamechange{s_2})\\[-\baselineskip]
      \rule[-1ex]{5cm}{.4pt}\hspace{2em}                                       \< \rule[-1ex]{5cm}{.4pt}\\
      \vec{M} := 0^{\rho\times \gamma}                             \< \vec{M} := 0^{\rho\times \gamma}\\
      \vec{K} := 0^{\rho\times \gamma}                             \< \vec{K} := \gamechange{({0}^{\eta})^{\rho\times \gamma}}\\
      \vec{C} := 0^{\rho\times \gamma}                             \< \\
      \pcforeach (d,m) \in S                                       \< \pcforeach (d,m) \in S\\
      \quad \pcforeach i \in [\rho]                                \< \quad \pcforeach i \in [\rho]\\
      \quad \quad j := \gamechange{\prf(s_1,\gamma,(r,i,d))}       \< \quad \quad j := \prf(s_1,\gamma,(r,i,d))\\
      \quad \quad \vec{M}[i,j] := \vec{M}[i,j] + m                 \< \quad \quad \vec{M}[i,j] := \vec{M}[i,j] + m\\
                                                                   \< \quad \quad \gamechange{\vec{k} := \entry(s_2,d)}\\
      \quad \quad \vec{K}[i,j] := \vec{K}[i,j] + d                 \< \quad \quad \vec{K}[i,j] := \vec{K}[i,j] + \gamechange{(m \cdot \vec{k})}\\
      \quad \quad \vec{C}[i,j] := \vec{C}[i,j] + 1                 \< \\
      \pcreturn (\vec{M},\vec{K},\vec{C})                          \< \pcreturn (\vec{M},\vec{K})\\[1em]
      \basicdelete_1((\vec{K},\vec{M}, \vec{C}), \tilde S, \gamechange{(r,s_1)})  \< \basicdelete_2((\vec{K},\vec{M}), \tilde S\gamechange{, (r,s_1,s_2)})\\[-\baselineskip]
      \rule[-1ex]{5cm}{.4pt}                                       \< \rule[-1ex]{5cm}{.4pt}\\
      \pcforeach (d,m) \in \tilde S                                \< \pcforeach (d,m) \in \tilde S\\
      \quad \pcforeach i \in [\rho]                                \< \quad \pcforeach i \in [\rho]\\
      \quad \quad j := \gamechange{\prf(s_1,\gamma,(r,i,d))}       \< \quad \quad j := \gamechange{\prf(s_1,\gamma,(r,i,d))}\\
                                                                   \< \quad \quad \gamechange{\vec{k} := \entry(s_2,d)}\\
      \quad \quad \vec{M}[i,j] := \vec{M}[i,j] - m                 \< \quad \quad \vec{M}[i,j] := \vec{M}[i,j] - m\\
      \quad \quad \vec{K}[i,j] := \vec{K}[i,j] - d                 \< \quad \quad \vec{K}[i,j] := \vec{K}[i,j] - \gamechange{(m \cdot \vec{k})}\\
      \quad \quad \vec{C}[i,j] := \vec{C}[i,j] - 1                 \< \\
      \pcreturn (\vec{M},\vec{K},\vec{M})                          \< \pcreturn (\vec{M},\vec{K})\\[1em]
      \basicdecode_1((\vec{K}, \vec{M}, \vec{C}),\gamechange{(r,s_1)})\< \basicdecode_2((\vec{K}, \vec{M})\gamechange{,(r,s_1,s_2)})\\[-\baselineskip]
      \rule[-1ex]{5cm}{.4pt}                                       \< \rule[-1ex]{5cm}{.4pt}\\
      S' := \emptyset                                              \< S' := \emptyset\\
      \pcfor (i,j) \in [\rho]\times[\gamma]                        \< \pcfor (i,j) \in [\rho]\times[\gamma]\\
                                                                   \< \quad \gamechange{\pcif \vec{M}[i,j] \neq 0}\\
                                                                   \< \quad\quad \gamechange{d := \recindex\bigl(s_2,(\vec{M}[i,j])^{-1}\cdot\vec{K}[i,j]\bigr)}\\
      \quad\pcif \vec{C}[i,j]=1                                    \< \quad\quad \pcif \gamechange{d\in[n]}\\
      \quad \quad S' := S'\cup \{(\vec{K}[i,j],\vec{M}[i,j])\}     \< \quad \quad \quad S' := S'\cup \{(\gamechange{d},\vec{M}[i,j])\}\\
      \pcreturn S'                                                 \< \pcreturn S'
    }
  \end{pchstack}
  \caption{
    The left hand side shows the basic IBLT, but using a PRF with variable codomain as replacement for the truly random functions.
    The difference between the original basic IBLT and this one are marked in gray.
    The right hand side shows a modified basic IBLT that works without a count matrix and allows insertions using only addition and multiplication by constants.
    The differences are again marked in gray.
    As long as all inserted messages are non-zero, the decoding of all three filters will be the same with overwhelming probability.
  }\label{fig:modbasiciblt}
\end{figure}
We now state and prove several lemmas about these two variants.
The first lemma states that the first variant described in \autoref{fig:modstackediblt} still works as expected, when truly random functions are replaced by pseudorandom ones.

\begin{lemma}\label{lem:deconeworks}
  Let $\prf$ be a variable output domain pseudorandom function as defined in \autoref{def:prf}.
  Then for any set $S \subseteq \FF_q\times\FF_q$ with $|S|\leq n$ and such that for all $(i,m),(i',m')\in S$, $i\neq i'$ it holds that
  \[
    \Pr[\decode_1(\encode_1(\rho,\gamma,S,s_1),s_1) = S] \geq 1 - 2^{-\kappa} - \negl
  \]
  where the probability is taken over the uniform choice of $s_1$.
\end{lemma}
\begin{proof}
  The lemma follows from \autoref{thm:mainiblt} and by a simple reduction to the pseudorandom of $\prf$.
  Let $S$ be an arbitrary set.
  We established the claimed bound by constructing an adversary $\adv$ against the pseudorandomness of $\prf$ as follows.
  We then related the success probability of $\adv$, to the probability of $\encode_1$ and $\decode_1$ working as intended.
  On input $\secparam$ and given access to an oracle $o$ that contains either a truly random function of $\prf(s_1,\cdot,\cdot)$, $\adv$ computes
  \[
    S' = \decode(\encode(\rho,\gamma,S,\vec{h}),\vec{h})
  \]
  but replaces invocations of $h_{i,j}(\cdot)$ with queries of the form $o(\gamma_i,(i,j,\cdot))$.
  If $S'=S$, $\adv$ outputs $0$, otherwise it outputs $1$.
  Note that if $o$ contains a truly random function, this perfectly simulates
  \[
    \decode(\encode(\rho,\gamma,S,\vec{h}),\vec{h}).
  \]
  If on the other hand $o$ contains $\prf(s_1,\cdot,\cdot)$, this perfectly simulates
  \[
    \decode_1(\encode_1(\rho,\gamma,S,s_2),s_2).
  \]
  From the pseudorandomness of $\prf$ it follows that
  \[
    \biggl|\begin{aligned}&\Pr[\decode(\encode(\rho,\gamma,S,\vec{h}),\vec{h})=S]\\ -{}& \Pr[\decode_1(\encode_1(\rho,\gamma,S,s_2),s_2)=S]\end{aligned}\biggr| \leq \negl.
  \]
  Combined with \autoref{thm:mainiblt} the lemma immediately follows.\qed
\end{proof}

The second variant of the stacked IBLT construction described in \autoref{fig:modstackediblt} essentially applies the same modification to stacked IBLTs that \cite{EC:FleLarSim23} applied to regular IBLTs.
That is, detecting \enquote{peelable} entries no longer uses a count matrix, but instead uses a wunderbar pseudorandom vector.
The following lemma essentially states that, as long as the encoded set does not contain any zero entries, the two variants of stacked IBLTs will decode the same set with high probability if the wunderbar pseudorandom vector operates over a large enough universe.
\begin{lemma}\label{lem:basicdeconeandtwosame}
  Let $(\entry,\recindex)$ be a wunderbar pseudorandom vector.
  Then, for any $r,\rho,\gamma \in \ZZ$, any PRF key $s_2$, and any set $S \subseteq [n]\times(\FF_q\setminus\{0\})$ such that $|S| \leq t$ and for all distinct $(i,m),(i',m')\in S$, $i\neq i'$ it holds that
  \[  
    \begin{aligned}
    &\Pr\Biggl[
    \begin{aligned}
    &\basicdecode_1(\basicencode_1(\rho,\gamma,S,(r,s_1)),(r,s_1))\\ 
    ={}&\basicdecode_2(\basicencode_2(\rho,\gamma,S,(r,s_1),s_2),(r,s_1),s_2)
    \end{aligned}
    \Biggr]\\ \geq{}& 1 - \frac{\rho\gamma (n^2+n)}{|K|}
    \end{aligned}
  \]
  where the probability is taken over the uniform choice of $s_1$.
\end{lemma}
\begin{proof}
  Let $S_1,S_2$ be the sets decoded by $\basicdecode_1$ and $\basicdecode_2$.
  We consider two types of errors:
  There could be an $(d,m) \in S_1 \setminus S_2$ or an $(d,m) \in S_2 \setminus S_1$.
  
  In the first case, since $\basicdecode_1$ is decoding the element, it must the case that $(d,m)$ is mapped into a cell on its own.
  However, this implies that the corresponding cell in the output of $\basicencode_2$ will contain $m$ in the value matrix and $m\cdot \entry(s_2,d)$ in the key matrix.
  Therefore, since $m\neq 0$ and by the correctness of the wunderbar pseudorandom vector, $\basicdecode_2$ will also decode the same element.
  
  In the second case, it must hold that several entries $m_1,\dots,m_a$ got mapped to the same position, but it so happens that
  \[
    \recindex\bigl(s_2,\bigl(\sum_{i=1}^{a} m_i\bigr)^{-1} \cdot \sum_{i=1}^{a} m_i\cdot\entry(s_2,d)\Bigr) \in [n]
  \]
  by using the pseudorandomness of the wunderbar pseudorandom vector and applying \autoref{cor:sumsdonotequal} we can conclude that this will happen for any particular cell with probability at most $(n^2+n)/|K|$.
  Since there are $\rho\gamma$ cells, the lemma follows by a union bound over the number of cells.\qed
\end{proof}
The following lemma states that deletion works as expected in both variants of the basic IBLTs described in \autoref{fig:modbasiciblt} and used as building blocks in the variants of the stacked IBLT described in \autoref{fig:modstackediblt}.
That is, if a set $S$ is encoded and a subset $\tilde S$ is \emph{deleted} from the encoding, the result is \emph{identical} to a fresh encoding of $S\setminus\tilde S$ in both constructions.
\begin{lemma}\label{lem:deleteworks}
  For any $r,\gamma,\rho\in\ZZ$, any PRF key $s_1$, any wunderbar pseudorandom vector $s_2$, any set $S \subseteq [n]\times\FF_q$ such that for all distinct $(i,m),(i',m')\in S$, $i\neq i'$, and any subset $\tilde S \subseteq S$ it holds that
  \[
    \begin{aligned}
    &\basicdelete_1(\basicencode_1(\rho,\gamma,S,(r,s_1)),\tilde S, (r,s_1))\\ ={}& \basicencode_1(\rho,\gamma,S\setminus\tilde S,(r,s_1))\end{aligned}
  \]
  and
  \[
    \begin{aligned}&\basicdelete_2(\basicencode_2(\rho,\gamma,S,(r,s_1),s_2),\tilde S, r,s_1,s_2)\\ ={}& \basicencode_2(\rho,\gamma,S\setminus\tilde S,(r,s_1),s_2)\end{aligned}
  \]
  
  where the probability is taken over the choice of $\vec{h}$, $s_1$, and $s_2$.
\end{lemma}
\begin{proof}
  The lemma follows easily by observing that deletion exactly subtracts the values that were added during encoding in both cases.\qed
\end{proof}
The following lemma now states that also the second variant of the stacked IBLT described in \autoref{fig:modstackediblt} works as intended, as long as the wunderbar pseudorandom vector operates over a large enough universe $K$ and the set $S$ does not contain any zero entries.
\begin{lemma}\label{lem:decodetwoworks}
  Let $\prf$ be a variable output domain pseudorandom function as defined in \autoref{def:prf}.
  Let $(\entry,\recindex)$ be a wunderbar pseudorandom vector.
  Then there exists a large enough constant $C'> 0$ such that for any set $S \subseteq \FF_q\times\{\FF_q\}$ with $|S|\leq t$ and such that for all distinct $(i,m),(i',m')\in S$, $i\neq i'$ it holds that
  \[
    \begin{aligned}&\Pr[\decode_2(\encode_2(\rho,\gamma,S,s_1,s_2),s_1,s_2) = S]\\ \geq{}& 1 - 2^{-\kappa} - \frac{C'(n^2+n)(t+\kappa\log\kappa)}{|K|} - \negl\end{aligned}
  \]
  where the probability is taken over the uniform choice of $s_1$ and $s_2$.
\end{lemma}
\begin{proof}
  Let $S \subseteq \FF_q\times\{\FF_q\}$ with $|S|\leq t$ be arbitrary.
  Consider the two decoding procedures running in parallel.
  Clearly, for the end result to differ, one of the executions of $\basicdecode_{1/2}$ has to result in different outputs.
  
  Let $0 \leq \tilde \imath < \lceil\log t\rceil$ be an index, such that for all executions of $\basicdecode_{1/2}$ with $i\leq \tilde\imath$ the outputs were identical.
  Let $\tilde S$ be the set $S'$ \emph{before} the $i$th execution of $\basicdecode_{1/2}$.
  Clearly $\tilde S'$ is the same in both cases.
  Since $\basicdecode_1$ decodes elements if and only if they happen to be alone in their cell, $\basicdecode_1$ never causes any false positives and it must always hold that $\tilde S\subseteq S$.
  It thus follows from \autoref{lem:deleteworks}, that the outputs of the $\tilde \imath$th executions of $\basicdecode_{1/2}$ are
  \[
    \basicdecode_1(\basicencode_1(\rho_{\tilde\imath},\gamma_{\tilde\imath},S\setminus\tilde S,(\tilde\imath,s_1)),(\tilde\imath,s_1))
  \]
  and
  \[
    \basicdecode_2(\basicencode_2(\rho_{\tilde\imath},\gamma_{\tilde\imath},S\setminus\tilde S,(\tilde\imath,s_1),s_2),(\tilde\imath,s_1),s_2)
  \]
  for some choice of $\rho_{\tilde\imath}$ and $\gamma_{\tilde\imath}$.
  
  By \autoref{lem:basicdeconeandtwosame} the probability that the output differs is then at most $\rho_{\tilde\imath}\gamma_{\tilde\imath}(n^2+n)/|K|$.
  With a simple union bound over all indices $0\leq \tilde \imath < \log t$ and by observing that the entire datastructure overall has $O(t+\kappa\log\kappa)$ cells
  it then follows that there exists some large enough constant $C'$ such that the output of $\decode_2$ differs from the output of $\decode_1$ with probability at most
  \[
    \frac{n^2+n}{|K|}\cdot\smashoperator{\sum_{0 \leq i < \log t}} \rho_{i}\gamma_{i} \leq \frac{C'(n^2+n)(t+\kappa\log\kappa)}{|K|}.
  \]
  Since by \autoref{lem:deconeworks} the output of $\decode_1$ is correct with probability $1 - 2^{-\kappa} - \negl$, the lemma follows by another union bound.\qed
\end{proof}
We now specify a final variant of the basic encoding procedure in \autoref{fig:encryptedbasicfilter}.
Essentially the only important difference between $\widehat{\basicencode}$ and $\basicencode_2$ is that the former acts on an encrypted version of the encoded set (represented by a vector of ciphertexts).
\begin{figure}[t]\centering
  \begin{pchstack}[boxed]
  \procedure{$\widehat{\basicencode}(\rho,\gamma, \vec{c}, (r,s_1),s_2))$}{
    \vec{M} := \enc{0^{\rho\times \gamma}}\\
    \vec{K} := \enc{({0}^{\delta})^{\rho\times \gamma}}\\
    \pcforeach \gamechange{d \in [|\vec{c}|]}\\
    \quad \pcforeach i \in [\rho]\\
    \quad \quad j := \prf(s_1,\gamma,(r,i,d))\\
    \quad \quad \vec{M}[i,j] := \vec{M}[i,j] \gamechange{\boxplus c_d}\\
    \quad \quad \vec{k} := \entry(s_2,d)\\
    \quad \quad \vec{K}[i,j] := \vec{K}[i,j] \gamechange{\boxplus (c_d \boxdot \vec{k})}
    }
  \end{pchstack}
  \caption{The $\widehat{\basicencode}$ procedure is a modified version of $\basicencode_2$ to allow filling the filter under additively homomorphic encryption. Changes are marked in gray.}
  \label{fig:encryptedbasicfilter}
\end{figure}
This now finally allows us to state the actual ciphertext compression scheme in \autoref{fig:compscheme} and we state the correctness of the compression scheme in \autoref{theo:main}.
\begin{figure}\centering
  \begin{pchstack}[boxed]
    \procedure{$\compress(\pk,\vec{c})$}{
      s_1 \gets \bin^\secpar\\
      s_2 \gets \sample(\secparam,1^n)\\
      \pcfor 0 \leq i < \log t - \log \tau\\
      \quad F_i := \gamechange{\widehat{\basicencode}}(1,\lceil Cn2^{-i}\rceil,\gamechange{\vec{c}},(i,s_1),s_2)\\
      \pcfor 0 \leq i < \log \tau\\
      \quad i' := \lfloor\log t - \log \tau\rfloor\\
      \quad F_{i'} := \gamechange{\widehat{\basicencode}}(2^i,\lceil C\tau 2^{-i}\rceil,\gamechange{\vec{c}},(i',s_1),s_2)\\
      \pcreturn ((F_0,\dots,F_{\lceil\log t\rceil-1}),s_1,s_2)
    }
    \procedure{$\decompress(\sk,(\vec{F},s_1,s_2))$}{
    \vec{F}' := \dec(\sk,\vec{F})\\
      \pcreturn \decode_2(\vec{F'},s_1,s_2))
    }
  \end{pchstack}

  \caption{A ciphertext compression scheme for arbitrary additively homomorphic encryption schemes for $\filtered_{t,\pk}^\mathsf{ip}$.}
  \label{fig:compscheme}
\end{figure}

\begin{theorem}\label{theo:main}
  Let $\encscheme=(\gen,\enc,\dec)$ be an additively homomorphic encryption scheme with plaintext space $\FF_q$ and ciphertext length $\xi=\xi(\secpar)$.
  Let $(\sample,\allowbreak\entry,\recindex)$ be a wunderbar pseudorandom vector with index recovery for a universe $K\subseteq\FF^\eta$ with $|K|\geq C'(n^2+n)(t+\kappa\log\kappa)\cdot 2^\kappa$ for a large enough constant $C'>0$ and let $\prf$ be a pseudorandom function with variable codomain.
  Then $(\compress,\decompress)$ as specified in \autoref{fig:compscheme} is a $(1-2^{-(\kappa-1)}-\negl)$-correct $(\secpar+(t+\kappa\log\kappa)\eta\xi)/(n\xi))$ compressing ciphertext compression scheme for $\filtered^{\mathsf{ip}}_t$.
\end{theorem}
Before we prove this theorem we will state the following simple corollary that follows simply by instantiating the construction with the wunderbar pseudorandom vector from \autoref{sec:wunderbar} and holds for all reasonable encryption schemes with ciphertext size $\xi=\Omega(\secpar)$.
\begin{corollary}
  Let $\encscheme=(\gen,\enc,\dec)$ be an additively homomorphic encryption scheme with plaintext space $\FF_q$ and ciphertext length $\xi=\Omega(\secpar)$.
  Let $\prp$ be a pseudorandom permutation over $\bin^{\kappa+C''+ 2\log n + \log(t+\kappa\log\kappa)}$ for some large enough constant $C''$ and let $\prf$ be a pseudorandom function with variable codomain.
  Then $(\compress,\decompress)$ as specified in \autoref{fig:compscheme} can be instantiated to be a $(1-2^{-(\kappa-1)}-\negl)$-correct and $\tilde O(\frac{\kappa t+\kappa^2}{n\log q})$ compressing\footnote{The soft-O notation $\tilde O( \cdot )$ ignores log factors in $\kappa$ and $n$.} ciphertext compression scheme for $\filtered^{\mathsf{ip}}_t$.
\end{corollary}

\begin{proof}[\autoref{theo:main}]
  First observe that $\compress$ executes exactly $\encode_2$ on the set $S = \{(d,m) \in [n]\times\FF_q \mid \dec(\gen^{-1}(\pk),c_d)\}$ but under homomorphic encryption.
  Each cell in the encoding is computed as the inner product of the ciphertext vector and some plaintext vector.
  It thus follows from the $\ipcirc$ validity of $\vec{c}$, that after decryption step in $\decompress$ we have $\vec{F}' = \encode_2(S,s_1,s_2)$.
  However, since by design any $(d,0)\in S$ does not influence the value of $\vec{F}'$ we have in fact that $\vec{F}' = \encode_2(S',s_1,s_2)$ where $S' := \{(d,m)\in S \mid m \neq 0\} = \sparse(\vec{})$.
  Since $\vec{c} \in \filtered_{t,\pk}^{\mathsf{ip}}$ and thus $|S'| \leq t$, we can apply \autoref{lem:decodetwoworks} that
  \begin{align*}
    &\Pr[\decompress(\sk,\compress(\pk,\vec{c})) = \sparse(\dec(\sk,\vec{c})]\\
    ={}&\Pr[\decode_2(\encode_2(S',s_1,s_2),s_1,s_2) = S']\\
    \geq{}& 1 - 2^{-\kappa} - \frac{O(n^2(t+\kappa\log\kappa))}{|K|} - \negl\\
    ={}&1 - 2^{-\kappa} - \frac{O(n^2(t+\kappa\log\kappa))}{\Omega(n^2(t+\kappa\log\kappa)\cdot 2^\kappa} - \negl\\
    \geq{}& 1 - 2^{-(\kappa-1)} - \negl
  \end{align*}
  as claimed.
  
  To see the compression factor, consider that the output of $\compress$ consists of $s_1$ and $s_2$, both of which have length $O(\secpar)$ as well as the encrypted stacked IBLT without counters.
  The IBLT consists of pairs of value and key matrices.
  The value matrices combined have $O(t+\kappa\log\kappa)$ entries of $1$ ciphertext each and the key matrices combined have $O(t+\kappa\log\kappa)$ entries of $\eta$ ciphertexts each.
  Thus overall the output of $\compress$ has a length of $O(\secpar + (t+\kappa\log\kappa)\eta\xi)$ bits leading to the claimed compression factor.\qed
\end{proof}

\end{document}